\documentclass[11pt]{article}
\RequirePackage{algorithm,algorithmic,graphicx,amssymb,epsf,epic}

%
%
%
%
%

%
%

\def\denseformat{
\setlength{\textheight}{9.5in}
\setlength{\textwidth}{6.9in}
\setlength{\evensidemargin}{-0.3in}
\setlength{\oddsidemargin}{-0.3in}
\setlength{\headsep}{10pt}
\setlength{\topmargin}{-0.44in}
\setlength{\columnsep}{0.375in}
\setlength{\itemsep}{0pt}
}





%
%
%
%
%
%
\newtheorem{theorem}{Theorem}[section]

\newtheorem{lemma}[theorem]{Lemma}

\newtheorem{corollary}[theorem]{Corollary}


%
%

\def\boldhead#1:{\par\vskip 7pt\noindent{\bf #1:}\hskip 10pt}
\def\ithead#1:{\par\vskip 7pt\noindent{\it #1:}\hskip 10pt}

\def\inline#1:{\par\vskip 7pt\noindent{\bf #1:}\hskip 10pt}
\def\midinline#1:{\par\noindent{\bf #1:}\hskip 10pt}
\def\dnsinline#1:{\par\vskip -7pt\noindent{\bf #1:}\hskip 10pt}
\def\ddnsinline#1:{\newline{\bf #1:}\hskip 10pt}
\def\largeinline#1:{\par\vskip 7pt\noindent{\large\bf #1:}\hskip 10pt}
%

\long\def\comment #1\commentend{}
\long\def\commhide #1\commhideend{}
\long\def\commfull #1\commend{#1}
\long\def\commabs #1\commenda{}
\long\def\commtim #1\commendt{#1}
\long\def\commb #1\commbend{}
%
%
\long\def\commedit #1\commeditend{} 

\long\def\commB #1\commBend{}       

\long\def\commex #1\commexend{}     

\long\def\commsiena #1\commsienaend{}  
                                         
\long\def\commBI #1\commBIend{}  
                                         

\long\def\CProof #1\CQED{}

\def\blackslug{\hbox{\hskip 1pt \vrule width 4pt height 8pt
    depth 1.5pt \hskip 1pt}}
\def\QED{\quad\blackslug\lower 8.5pt\null\par}


\long\def\PPP#1{\noindent{\bf Proof:}{ #1}{\quad\blackslug\lower 8.5pt\null}}

\long\def\denspar #1\densend
{#1}

%
%


%
%

\setlength{\marginparwidth}{1in}
\setlength{\marginparpush}{-5ex}
\newif\ifnotesw\noteswtrue
   {\ifnotesw\marginpar[\hfill\(\top\)]{\(\top\)}\fi}%
   {\ifnotesw\marginpar[\hfill\(\bot\)]{\(\bot\)}\fi}

\newcommand{\mnote}[1]%
    {\ifnotesw\marginpar%
        [{\scriptsize\it\begin{minipage}[t]{\marginparwidth}
        \raggedleft#1%
                        \end{minipage}}]%
        {\scriptsize\it\begin{minipage}[t]{\marginparwidth}
        \raggedright#1%
                        \end{minipage}}%
    \fi}

%
%
\def\cA{{\cal A}}

\def\cE{{\cal E}}

\def\cH{{\cal H}}

\def\cL{{\cal L}}

\def\cQ{{\cal Q}}

\def\cU{{\cal U}}






%
%
%
%
\def\MathF{\hbox{\rm I\kern-2pt F}}
\def\MathP{\hbox{\rm I\kern-2pt P}}
\def\MathR{\hbox{\rm I\kern-2pt R}}
\def\MathZ{\hbox{\sf Z\kern-4pt Z}}
\def\MathN{\hbox{\rm I\kern-2pt I\kern-3.1pt N}}
\def\MathC{\hbox{\rm \kern0.7pt\raise0.8pt\hbox{\footnotesize I}
\kern-4.2pt C}}
\def\MathQ{\hbox{\rm I\kern-6pt Q}}




%
%


\newsavebox{\ttop}\newsavebox{\bbot}

%


\def\eps{\epsilon}


%
%




\denseformat

\newcommand {\ignore} [1] {}

\begin{document}

\title{\vspace{-0.02in} Fast Constructions of Light-Weight Spanners for General Graphs}
\author{
Michael Elkin \thanks{Department of Computer Science,
        Ben-Gurion University of the Negev, POB 653, Beer-Sheva 84105, Israel.
         \newline E-mail: {\tt \{elkinm,shayso\}@cs.bgu.ac.il}
          \newline Both authors are supported 
         by the BSF grant No.\ 2008430 and by the ISF grant No.\ 87209011.
         In addition, both authors are partially supported by the Lynn and William Frankel
        Center for Computer Sciences.}
\and
Shay Solomon $^*$\thanks{This research has been supported by the Clore Fellowship grant No.\ 81265410.}}

\date{\empty}

\begin{titlepage}
\def\thepage{}
\maketitle

\begin{abstract}
Since the pioneering works of Peleg and Sch$\ddot{\mbox{a}}$ffer \cite{PS89}, 
Alth$\ddot{\mbox{o}}$fer  et al.\ \cite{ADDJS93}, and Chandra et al.\ \cite{CDNS92},
it is known that for every \emph{weighted} undirected $n$-vertex $m$-edge graph $G = (V,E)$, 
and every integer $k \ge 1$, there exists a $(2k-1)$-spanner with $O(n^{1+1/{k}})$ edges
and weight $O(k \cdot n^{(1+\eps)/{k}}) \cdot \omega(MST(G))$, for any $\eps > 0$.
(Here $\omega(MST(G))$ stands for the weight of the minimum spanning tree of $G$.)
Nearly linear time algorithms for constructing $(2k-1)$-spanners with nearly $O(n^{1+{1}/{k}})$ edges were devised in \cite{BS03,RZ04,RTZ05}.
However, these algorithms fail to guarantee any meaningful upper bound on the weight of the constructed spanners.

To our knowledge, there are only two known algorithms for constructing sparse and light spanners for general graphs. 
One of them is the greedy algorithm of Alth$\ddot{\mbox{o}}$fer  et al.\ \cite{ADDJS93}, analyzed by Chandra et al.\ \cite{CDNS92}.  
The drawback of the greedy algorithm is that it requires $O(m \cdot (n^{1+{1}/{k}} +  n \cdot \log n))$ time.
The other algorithm is due to Awerbuch et al.\ \cite{ABP91}, from 1991. It constructs $O(k)$-spanners
with $O(k  \cdot n^{1+{1}/{k}} \cdot \Lambda)$ edges, weight $O(k^2 \cdot n^{{1}/{k}} \cdot \Lambda)
\cdot \omega(MST(G))$, within time $O(m \cdot k \cdot n^{{1}/{k}} \cdot \Lambda)$,
where $\Lambda$ is the logarithm of the aspect ratio of the graph.

The running time of both these algorithms is unsatisfactory. Moreover, the usually
faster algorithm of \cite{ABP91} pays for the speedup by significantly increasing
both the stretch, the sparsity, and the weight of the resulting spanner.

In this paper we devise an efficient algorithm for constructing sparse and light spanners.
Specifically, our algorithm constructs $((2k-1) \cdot (1+\eps))$-spanners with $O(k \cdot n^{1+{1}/{k}})$
edges and weight $O(k \cdot n^{{1}/{k}}) \cdot \omega(MST(G))$, where $\eps > 0$
is an arbitrarily small constant.
The running time of our algorithm is $O(k \cdot m  + \min\{n \cdot \log n,m \cdot \alpha(n)\})$.
Moreover, by slightly increasing the running time we can reduce the other parameters.
These results address an open problem from the ESA'04 paper by Roditty and Zwick \cite{RZ04}.
\end{abstract}
\end{titlepage}

\pagenumbering {arabic} 

\section{Introduction}
{\bf 1.1~ Centralized Algorithms.~}
Given an undirected weighted graph $G = (V,E)$ and a parameter $t \ge 1$, a subgraph $H = (V,E')$ of $G$
($E' \subseteq E$) is called a \emph{$t$-spanner} if for every edge $e = (u,v) \in E$, $dist_H(u,v) \le t \cdot dist_G(u,v)$.
(Here $dist_G(u,v)$ stands for the distance between $u$ and $v$ in the graph $G$.)
Graph spanners were introduced in 1989 by  Peleg and Sch$\ddot{\mbox{a}}$ffer \cite{PS89}
and Peleg and Ullman \cite{PU89}, who showed that for every unweighted $n$-vertex graph $G = (V,E)$ and an integer
parameter $k \ge 1$, there exists an $O(k)$-spanner with $O(n^{1+ 1/k})$ edges.
Alth$\ddot{\mbox{o}}$fer  et al.\ \cite{ADDJS93} improved and generalized these results. They analyzed
the natural greedy algorithm for constructing graph spanners, and showed  that for every $n$-vertex \emph{weighted}
graph $G = (V,E)$ and an integer parameter $k \ge 1$, this algorithm constructs a $(2k-1)$-spanner
with $O(n^{1+1/k})$ edges. (This is near-optimal \cite{PS89}.) They also showed that the weight of the resulting
spanner is $O(n/k) \cdot \omega(MST(G))$. (We will use the normalized notion of weight, called \emph{lightness}, which is the ratio
between the weight of the spanner and $\omega(MST(G))$.) In SoCG'92, Chandra et al.\ \cite{CDNS92} improved the lightness
bound, and showed that spanners obtained by the greedy algorithm have lightness 
$O(k \cdot n^{(1+\eps)/{k}})$, for an arbitrarily small $\eps > 0$.
However, the running time of their algorithm is $O(m \cdot (n^{1+{1}/{k}} +  n \cdot \log n))$, where $m$ stands for $|E|$.
Around the same time Awerbuch et al.\ \cite{ABP91} devised an algorithm that constructs $O(k)$-spanners with
$O(k  \cdot n^{1+{1}/{k}} \cdot \Lambda)$ edges  and lightness $O(k^2 \cdot n^{{1}/{k}} \cdot \Lambda)$, where $\Lambda$ is the logarithm of the aspect ratio of the input graph.
The running time of the algorithm of \cite{ABP91} is $O(m \cdot k \cdot n^{{1}/{k}} \cdot \Lambda)$.

In the two decades that passed since the results of \cite{PS89,PU89,ADDJS93,CDNS92,ABP91} graph spanners
turned out to be extremely useful. Among their applications is compact routing \cite{PU89,PU89b,TZ01},
distance oracles and labels \cite{Peleg99,TZ01b,RTZ05}, network synchronization \cite{Awerbuch85}, and computing almost shortest
paths \cite{Coh93,RZ04,Elkin05,EZ06,FKMSZ05}.
Graph spanners also became a subject of intensive research for their own sake \cite{Coh93,EP04,Elkin05,BS03,
TZ06,FKMSZ05,EZ06,Woodruff06,Pettie09,BKMP10}.

In particular, a lot of research attention was devoted to devising efficient algorithms for constructing sparse
spanners for weighted graphs. Cohen \cite{Coh93} devised a randomized algorithm for constructing $((2k-1)\cdot (1+\eps))$-spanners
with $O(k \cdot n^{1+1/k})$ edges. Her algorithm requires expected $O(m \cdot n^{1/k} \cdot k)$ time. Improving upon \cite{Coh93},
Baswana and Sen \cite{BS03} devised a randomized algorithm that constructs $(2k-1)$-spanners with
expected $O(k \cdot n^{1+1/k})$ edges, within expected $O(k \cdot m)$ time.
Roditty et al.\ \cite{RTZ05} derandomized this algorithm without any loss in parameters, or in running time.
Roditty and Zwick \cite{RZ04} devised a deterministic algorithm for constructing $(2k-1)$-spanners with
$O(n^{1+1/k})$ edges in $O(k \cdot n^{2+1/k})$ time. In the discussion section at the end of their
paper Roditty and Zwick \cite{RZ04} write: 

{\em ``Another interesting property of the (original) greedy algorithm, shown by 
\cite{CDNS92},
is that the total weight of the edges in the $(2k-1)$-spanner that it constructs is at
most $O(n^{(1+\eps)/k} \cdot \omega(MST(G)))$,\footnote{Actually, the weight in \cite{CDNS92} is $O(k \cdot n^{(1+\eps)/k} \cdot \omega(MST(G)))$.} for any $\eps > 0$.      
Unfortunately, this property no longer holds
for the modified greedy algorithm. Again, it is an interesting open problem to
obtain an efficient spanner construction algorithm that does have this property.''}

In the current paper we devise such a construction.
Specifically, our algorithm constructs $((2k-1) \cdot (1+\eps))$-spanners with $O(k \cdot n^{1+1/k})$ edges
and lightness $O(k \cdot n^{1/k})$, and does so in   time $O(k \cdot m + \min\{n \cdot \log  n, m \cdot \alpha(n)\})$,
where $\alpha(n)$ is an inverse Ackermann function. In other words, the running time of our algorithm is near-optimal and is drastically better
than the running time $O(m \cdot (n^{1+{1}/{k}} +  n \cdot \log n))$ 
of \cite{CDNS92} and than that of \cite{ABP91} ($O(m \cdot k \cdot n^{{1}/{k}} \cdot \Lambda)$).
We pay for this speed up by a small increase (by a factor of $(1+\eps)$) in the stretch and a small increase (by a factor of $k$) in the number of edges.
(This comparison is with \cite{CDNS92}. Our algorithm strictly outperforms the algorithm of \cite{ABP91}.)
We also have another variant  of our algorithm with a slightly higher running time ($O(k \cdot n^{2+1/k})$), and with
$O(n^{1+1/k})$ edges and lightness $O(k \cdot n^{1/k})$. 

Note that the relationship between the stretch and lightness in both our results
is essentially the same as in the state-of-the-art bound \cite{CDNS92}.
Specifically, in our result the slack factor $(1+\eps)$ appears in
the stretch, while in \cite{CDNS92} it appears in the exponent of the lightness.
The number of edges in our slower construction (that runs in $O(k \cdot n^{2+1/k})$ time)
is the same as in \cite{CDNS92}. The faster variant of our algorithm (that runs in near-optimal
time of $O(k \cdot m  + \min\{n \cdot \log n, m \cdot \alpha(n)\})$, 
pays for the speedup by increasing the number of edges by a factor of $k$.
See Table \ref{tab1} for a concise comparison
of our and previous results on light spanners. (The lightness of all other spanner constructions \cite{Coh93,BS03,RTZ05,RZ04}
is unbounded.)
\begin{table*}
\begin{center}
\resizebox {\textwidth }{!}{
\footnotesize
\begin{tabular}{|c||c|c|c|c|}
\hline results & stretch &  number of edges & lightness & running time  \\ 
\hline \hline  
~Alth$\ddot{\mbox{o}}$fer et al.\ \cite{ADDJS93}  & $2k-1$ & $O(n^{1+1/k})$  & $O(n/k)$ & $O(m \cdot n^{1+1/k})$  \\
\hline Chandra et al.\ \cite{CDNS92}  &  $2k-1$ & $O(n^{1+1/k})$ &  $O(k \cdot n^{(1+\eps)/k})$ & $O(m \cdot (n^{1+{1}/{k}} +  n \cdot \log n))$  \\
\hline Awerbuch et al.\ \cite{ABP91} & $O(k)$  & $O(k \cdot n^{1+1/k} \cdot \Lambda)$ & $O(k^2 \cdot n^{1/k} \cdot \Lambda)$ & $O(m \cdot k \cdot n^{{1}/{k}} \cdot \Lambda)$ \\
\hline \hline Our faster construction & {\boldmath $(2k-1)\cdot(1+\eps)$} & {\boldmath $O(k \cdot n^{1+1/k})$} & {\boldmath $O(k \cdot n^{1/k})$} & {\boldmath $O(k \cdot m + \min\{n \cdot \log n, m \cdot \alpha(n)\})$}  \\
\hline Our slower construction & {\boldmath $(2k-1)\cdot(1+\eps)$} & {\boldmath $O(n^{1+1/k})$} & {\boldmath $O(k \cdot n^{1/k})$} & {\boldmath $O(k \cdot n^{2+1/k})$}  \\
\hline
\end{tabular}
}
\end{center}
\caption[]{ \label{tab1} \footnotesize A  concise comparison of previous and our constructions of light spanners.
All the constructions mentioned in this table are deterministic. Our results are indicated by bold fonts.
}
\end{table*}

Chandra et al.\ \cite{CDNS92} also showed that the greedy algorithm gives rise to a construction  of $O(\log^2 n)$-spanners with $O(n)$
edges and constant lightness, and to a construction of $O(\log n)$-spanners with $O(n)$
edges and lightness $O(\log n)$. 
The running time of these constructions is $O(m \cdot n \cdot \log n)$.
Our algorithm also constructs spanners with the same (up to constant factors) parameters.
The running time required by our algorithm to construct these spanners is $O(n^2 \cdot \log n)$.
\vspace{0.1in}
\\{\bf 1.2~ Streaming Algorithms.~}
In the streaming model of computation the input graph $G = (V,E)$ arrives as a ``stream'', i.e.,
the algorithm reads edges one after another. The algorithm is required to process edges efficiently,
and to store only a limited amount of information. In the context of computing spanners the natural
memory limitation is the size of the spanner. Multi-pass streaming algorithms also allow several
(ideally, just a few) passes over the input stream.

The streaming model of computation was introduced by Alon et al.\ \cite{AMS99} and by
Feigenbaum et al.\ \cite{FKSV02}. The study of graph problems in the streaming model 
was introduced by Feigenbaum et al.\ \cite{FKMSZ05}. In particular, 
 Feigenbaum et al.\ \cite{FKMSZ05} devised a randomized one-pass streaming algorithm for
computing a $(2k+1)$-spanner with expected $O(k \cdot \log n \cdot n^{1+1/k})$ edges,
using $O(k \cdot \log n \cdot n^{1/k})$ processing time-per-edge.
This result was improved in \cite{Elkin11}, who devised a randomized one-pass streaming
algorithm that computes $(2k-1)$-spanners with expected $O(k \cdot n^{1+1/k})$ edges,
using $O(1)$ processing time-per-edge. See also \cite{Baswana08}.
Elkin and Zhang \cite{EZ06} devised a multi-pass streaming algorithm for constructing 
sparse $(1+\eps,\beta)$-spanners. The number of passes in their algorithm is $O(\beta)$.

To our knowledge, there are currently no efficient streaming algorithms for computing \emph{light}
spanners. We show that our algorithm can be implemented efficiently in the streaming model
\emph{augmented with the sorting primitives} (henceforth, \emph{augmented streaming model}).
This model, introduced by Aggarwal et al.\ \cite{ADRR04} in FOCS'94, allows to have ``sort passes''.
As a result of a sort pass, in consequent passes one can assume that the input stream that the algorithm
reads is sorted. (See \cite{ADRR04} for the justification of this model. The authors in \cite{ADRR04}
argue that ``streaming computations with an added sorting primitive are a natural and efficiently
implementable class of massive data set computations''.)

The algorithm of Chandra et al.\ \cite{CDNS92} can be viewed as an algorithm in this model.
After the initial sorting pass, it requires one pass over the input stream. As a result it constructs
a $(2k-1)$-spanner with $O(n^{1+1/k})$ edges and lightness $O(k \cdot n^{(1+\eps)/k})$, for
an arbitrarily small $\eps > 0$. The processing time-per-edge of this algorithm is, however, 
$O(n^{1+1/k})$, i.e., prohibitively large.

We show that a variant of our algorithm computes $((2k-1)\cdot(1+\eps))$-spanners with 
expected $O(k \cdot n^{1+1/k})$ edges and expected lightness $O(k^2 \cdot n^{1/k})$.
It performs two passes over the input stream, that follow an initial sorting pass.
In the first pass the worst-case (resp., amortized) processing time-per-edge of our
algorithm is $O(\frac{\log n}{\log \log n})$ (resp., $O(\alpha(n))$).
The processing time-per-edge of our algorithm in its second pass over the input stream
is $O(1)$.
\vspace{0.1in}
\\{\bf 1.3~ Our Techniques.~}
Our algorithm is based on a transformation, which given a black-box construction of sparse (possibly heavy) 
spanners with a certain stretch $t$, efficiently produces sparse and \emph{light} spanners with roughly
the same stretch. We use this transformation in conjunction with a number of known algorithms
that produce sparse spanners, but do not provide any bound on their lightness.

Our transformation generalizes a \emph{metric transformation} from \cite{CDNS92}.
Specifically, the metric transformation of \cite{CDNS92} converts constructions of sparse spanners for metrics
into constructions of sparse and light spanners (for the same metric). The
\emph{generalized transformation} that we devise applies to weighted not necessarily complete graphs.
(Observe that a metric can be viewed as a complete weighted graph.)

There are a number of technical difficulties that we overcome in our way to the generalized
transformation. Next, we briefly discuss one of them.
The construction of \cite{CDNS92} hierarchically partitions the point set of the input metric
into clusters. Then it selects a representative point from each cluster, and invokes its input
black-box construction of sparse spanners on the metric induced by the representatives.
One can try to mimic this approach in graphs by replacing each missing metric edge 
between representatives by the shortest path between them. This approach, however, is 
doomed to failure, as the overall number of edges taken into the spanner in this way
might be too large. To overcome this difficulty we carefully select \emph{representative edges}
which are inserted into a certain auxiliary graph. Then the black-box input construction is applied
to the auxiliary graph. 
As a result we obtain a spanner of the auxiliary graph, which we call \emph{auxiliary spanner}.
This auxiliary spanner $\cQ = (\cU,\cE)$ is a graph over a new vertex set $\cU$, i.e., $\cU$ is not a 
subset of the original vertex set $V$. Next, we ``project'' the auxiliary spanner $\cQ$ onto
the original graph, i.e., we translate edges of $\cE$ into edges of the original edge set $E$.
This needs to be done carefully, to avoid blowing up the stretch and lightness.
Also, it is crucial that this translation procedure will be efficient. Interestingly, we do not
project vertices of $\cU$ onto vertices of $V$, but rather edges of $\cE$ onto edges
of $E$. In particular, for a vertex $u \in \cU$ and two edges $(u,x),(u,y)\in \cE$, they may be
translated into two vertex-disjoint edges $(u',x'),(u'',y') \in E$.
As a result a path in $\cQ$ does not translate into a path in $G$, 
but rather into a collection of possibly vertex-disjoint edges. 
We show that these edges can be carefully glued into a path. 
This gluing, however, comes at a price of slightly increasing the stretch.
\vspace{0.1in}
\\{\bf 1.4~ Related Work.~} 
The large body  of work on constructing graph spanners efficiently was already discussed in Section 1.1.
The problem of constructing light spanners efficiently was also studied in the context of \emph{geometric spanners}.
See \cite{DN97} and \cite{GLN02}, and the references therein.
\vspace{0.1in}
\\{\bf 1.5~ Organization.~} 
In Section 2 we present and analyze our algorithm in the centralized model of computation. The algorithm is described in Section 2.1, and its analysis appears in Section 2.2.
In Section 3 we present a few variants of our basic algorithm (from Section 2.1). In particular, the streaming variant of our algorithm is presented in Section 3.5. 
\vspace{0.1in}
\\{\bf 1.6~ Preliminaries.~} 
We will use the following results as a black-box (we write $n = |V|, m = |E|$).
\ignore{
For any pair $u,v \in V$ of vertices, let $dist_G(u,v)$
(respectively, $hop_G(u,v)$) denote the weighted (resp., unweighted or hop-) distance between $u$ and $v$ in $G$.
We say that a path $P$ in $G$ between some pair $u,v \in V$ 
of vertices is a \emph{weighted $t$-spanner path},
for some parameter $t \ge 1$, if its weight $\omega(P) = \sum_{e \in P} \omega(e)$
is no greater than $t \cdot dist_G(u,v)$.  
Similarly, we say that $P$ is an \emph{unweighted $t$-spanner path} if its hop-length $|P|$ (i.e., the number of edges in it)
is no greater than $t \cdot hop_G(u,v)$. A spanning subgraph $H = (V,E_H)$
of $G$ is called a \emph{weighted $t$-spanner} (respectively, \emph{unweighted $t$-spanner})
if there is a weighted $t$-spanner path (resp., unweighted $t$-spanner path)  in $H$ between every pair of vertices. 
We will use the shortcuts  \emph{$t$-spanner path} for \emph{weighted $t$-spanner path},
and \emph{$t$-spanner} for \emph{weighted $t$-spanner}. 
\vspace{0.05in}
\\{\bf 1.6.1~ Black-box Spanners.~} 
}
\begin{theorem} \cite{HZ96} [unweighted graphs] \label{unweighted}
For any unweighted graph $G = (V,E)$ and any integer $k \ge 1$,
a $(2k-1)$-spanner  with $O(n^{1+{1}/{k}})$ edges can be built in $O(m)$ time.
\end{theorem}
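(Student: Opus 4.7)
The plan is to adapt the deterministic clustering approach (essentially the one later made popular by Baswana--Sen, but derandomized here by exploiting the fact that $G$ is unweighted) and implement every step in amortized constant time per edge. I would maintain a sequence of clusterings $\cC_0 \supseteq \cC_1 \supseteq \ldots \supseteq \cC_{k-1}$, where $\cC_0 = \{\{v\} : v \in V\}$, and where a level-$i$ cluster is a connected subgraph of BFS radius at most $i$ around a center; the spanner $H$ will always contain a BFS tree of each cluster rooted at its center. Let $\mu := n^{1/k}$.

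To move from level $i$ to level $i+1$, I would perform one BFS expansion step. For each level-$i$ cluster $C$, I examine its \emph{layer}: the vertices outside all level-$i$ clusters that are adjacent to $C$, together with the neighboring level-$i$ clusters. I declare $C$ \emph{surviving at level $i+1$} if the expansion would multiply $|C|$ by at least $\mu$, and \emph{retired} otherwise. A retired cluster contributes to $H$ at most one edge per neighboring level-$i$ cluster. A surviving cluster absorbs a single representative from each neighboring level-$i$ cluster and also each unclustered vertex in its layer (one incident edge per absorbed vertex added to the BFS tree). After level $k-1$ every cluster is, in effect, retired.

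For the stretch, the standard argument goes through: given any edge $(u,v) \in E$ not included in $H$, trace $u$ and $v$ through the levels until the first level at which at least one of them, say $u$, lies in a retired cluster $C_u$. Then $v$ lies in some level-$i$ cluster $C_v$ which is adjacent to $C_u$, so $H$ contains the spanner edge from $C_u$ to $C_v$ plus the two BFS trees of $C_u,C_v$ of radius at most $k-1$ each, producing a $u$-to-$v$ path of length at most $(k-1) + 1 + (k-1) = 2k-1$.

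For the edge count, a clean charging argument suffices: each time a surviving cluster absorbs a neighbor it accounts for $\ge \mu$ vertices per added edge, so the edges added by surviving clusters across all levels sum to $O(n/\mu \cdot \mu) = O(n)$ per level times $k$ levels, while retired-cluster edges are charged to the $\mu$-fold growth of each surviving ancestor, yielding the telescoping bound $O(n \cdot \mu) = O(n^{1+1/k})$ in total. The main obstacle, and the distinguishing feature of \cite{HZ96} relative to the earlier greedy schemes, is the $O(m)$ running time. The plan would be to maintain, for every cluster, a bucketed adjacency list indexed by neighboring cluster, and to update these lists incrementally when a cluster absorbs a layer; the key invariant is that each edge of $G$ is touched only a constant number of times across all $k$ phases (once when its endpoints first lie in distinct clusters at some level, and once when those clusters are either merged or frozen as retired). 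Implementing the $\mu$-growth test, the selection of inter-cluster representatives, and the BFS-tree extensions all within this amortized budget, \emph{without} invoking union--find (so as to avoid the $\alpha(n)$ factor), is the delicate part of the argument.
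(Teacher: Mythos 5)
First, a point of reference: the paper does not prove this statement at all --- it is imported as a black box and cited to \cite{HZ96} (an unpublished manuscript of Halperin and Zwick), so there is no internal proof to compare against; your proposal has to stand on its own. As written, it does not. The critical flaw is in the stretch argument. In your scheme a \emph{retired} cluster contributes ``at most one edge per neighboring level-$i$ cluster,'' i.e.\ the inter-cluster edge has arbitrary endpoints $x\in C_u$, $y\in C_v$. The $u$--$v$ path must then go $u \rightsquigarrow x$ inside the BFS tree of $C_u$, across the edge $(x,y)$, and $y \rightsquigarrow v$ inside the tree of $C_v$; since the trees only guarantee radius at most $k-1$ \emph{from the center}, each of these detours costs up to $2(k-1)$, giving $4k-3$, not $(k-1)+1+(k-1)=2k-1$. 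The $2k-1$ bound needs the added edge to be incident to $u$ itself (so that only \emph{one} cluster tree is traversed: $1+2(k-1)=2k-1$), i.e.\ a vertex-centric rule ``each vertex of a retiring cluster adds one edge per neighboring cluster'' --- which is exactly the Baswana--Sen rule, and then your size bound collapses: deterministically, the number of clusters neighboring a given vertex is not bounded by $\mu$, and controlling it is precisely what the random sampling (or its nontrivial derandomization in \cite{RTZ05}) is for. Your charging argument does not substitute for it: the number of edges a retired cluster pays for is its number of neighboring clusters, which is not controlled by that cluster's own failed $\mu$-fold growth in the way you claim, and even under the most charitable reading (retirement implies fewer than $\mu\cdot|C|$ absorbable neighbors) the per-level bound $\mu\sum_C|C|\le \mu n$ summed over $k$ levels, plus the re-absorption of formerly retired vertices that your rule permits, yields $O(k\cdot n^{1+1/k})$ rather than $O(n^{1+1/k})$.

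A scheme that actually delivers the theorem is the sequential ball-growing construction (Peleg--Sch\"affer style, which is what the Halperin--Zwick linear-time algorithm refines): grow a single BFS ball $B$ around an arbitrary vertex as long as each new layer multiplies $|B|$ by at least $n^{1/k}$ (so the final radius is at most $k-1$), put the BFS tree of $B$ into $H$, add for every vertex $v$ outside $B$ adjacent to $B$ one edge from $v$ into $B$ (fewer than $n^{1/k}|B|$ such edges), delete $B$ with all incident edges, and recurse on the rest. There the stretch argument is sound because the added edge is incident to the outside endpoint, and $O(m)$ time is immediate because every edge is examined only while one of its endpoints' balls is being grown and is then deleted --- no union--find, no level synchronization, and none of the bookkeeping difficulties you flag at the end. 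Your level-synchronized, cluster-centric hybrid, by contrast, has a stretch bound that does not follow and a size bound that is off by a factor of $k$, so the proposal as it stands has genuine gaps in both of the theorem's quantitative guarantees.
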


\begin{theorem} \cite{BS03,RTZ05} [weighted graphs I] \label{weighted}
For any weighted graph $G = (V,E)$ and any integer $k \ge 1$,
a $(2k-1)$-spanner with $O(k \cdot n^{1+{1}/{k}})$ edges can be built in $O(k \cdot m)$ time.
\end{theorem}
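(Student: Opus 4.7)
The plan is to prove this via a randomized clustering scheme (the construction of Baswana and Sen), and then derandomize. I would build a hierarchy $\cC_0,\cC_1,\ldots,\cC_{k-1}$ of clusterings of subsets of $V$, where $\cC_0=\{\{v\}:v\in V\}$ and $\cC_{i+1}$ is obtained from $\cC_i$ by independently sampling each cluster with probability $n^{-1/k}$. Vertices inside sampled clusters are carried to level $i+1$ untouched; every other vertex $v$ at level $i$ is ``processed'' as follows. First, bucket the edges incident to $v$ by the $\cC_i$-cluster of the other endpoint using radix-sort on cluster identifiers, and keep only the lightest edge per bucket (call it a \emph{cluster-edge} of $v$). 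If none of $v$'s neighboring clusters at level $i$ is sampled into $\cC_{i+1}$, add all of $v$'s cluster-edges to the spanner; otherwise, let $Y$ be the sampled neighboring cluster whose cluster-edge is lightest, add that edge together with every cluster-edge of $v$ lighter than it, and attach $v$ to $Y$'s cluster at level $i+1$. A final vertex-cluster joining phase adds, for every vertex $v$, every level $i$, and every $\cC_i$-cluster adjacent to $v$, the lightest edge between $v$ and that cluster.

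The stretch bound of $2k-1$ will follow by induction on the level at which the endpoints of an edge become separated: each cluster at level $i$ contains an internal spanning tree of weighted diameter at most $2i\cdot w$ in the spanner constructed so far, where $w$ is the heaviest cluster-edge used to form it, and combining this with the final vertex-cluster phase yields a detour of weight at most $(2k-1)\cdot\omega(u,v)$ for every original edge $(u,v)$ that is not already in the spanner. The main obstacle will be bounding the expected edge count. For the clustering phase I would sort $v$'s cluster-edges by weight and observe that the number added is dominated by the rank of the first sampled cluster in this order, which is stochastically dominated by a geometric random variable with success probability $n^{-1/k}$ and hence has expectation $O(n^{1/k})$; summing over vertices and levels gives $O(k\cdot n^{1+1/k})$. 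For the vertex-cluster joining phase, the expected number of $\cC_i$-clusters adjacent to a fixed $v$ is also $O(n^{1/k})$, contributing at most $O(k\cdot n^{1+1/k})$ further edges.

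The running time is $O(k\cdot m)$ because at each of the $k-1$ levels each edge is inspected a constant number of times during bucketing and cluster-edge selection, with the radix-sort on cluster identifiers costing time linear in the number of incident edges. To obtain the deterministic bound of Roditty, Tarjan and Zwick, the random sampling is replaced by a greedy inclusion rule that adds a cluster to $\cC_{i+1}$ only when doing so preserves deterministic size and edge-count invariants; this is a derandomization via the method of conditional expectations, and each invariant-maintenance check can be amortized to constant work per affected edge, keeping the total running time $O(k\cdot m)$.
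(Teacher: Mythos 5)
A framing point first: the paper never proves this statement --- Theorem \ref{weighted} is imported as a black box from \cite{BS03} (the randomized construction) and \cite{RTZ05} (its derandomization) --- so the only thing to compare your sketch against is the cited construction itself. Your outline is indeed the Baswana--Sen clustering scheme, and your analysis of the clustering phase (the number of edges added per processed vertex is dominated by a geometric variable with success probability $n^{-1/k}$, giving $O(n^{1/k})$ in expectation per vertex per level) is the standard and correct argument for that phase, as is the $O(k\cdot m)$ time accounting via bucketing on cluster identifiers.

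Two parts do not hold as written, though. (1) Your final joining phase adds, for every vertex $v$ and \emph{every} level $i$, the lightest edge to every adjacent $\cC_i$-cluster, justified by the claim that the expected number of $\cC_i$-clusters adjacent to a fixed $v$ is $O(n^{1/k})$. That claim is false at intermediate levels: at level $0$ every neighbour of $v$ is its own cluster, so the number of adjacent clusters is $\deg(v)$, and summing your rule over all levels can contribute on the order of $k\cdot m$ edges rather than $O(k\cdot n^{1+1/k})$. In Baswana--Sen the joining phase is performed only with respect to the final clustering $\cC_{k-1}$, and its $O(n^{1+1/k})$ bound comes from the expected \emph{total} number of clusters at that level being $n\cdot(n^{-1/k})^{k-1}=n^{1/k}$; edges from $v$ to intermediate-level clusters are added only at the moment $v$ becomes unclustered, an event already covered by your geometric-domination count. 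Relatedly, in the stretch argument the crux is that every cluster-tree edge on the detour for a surviving edge $(u,v)$ has weight at most $\omega(u,v)$, which holds precisely because $(u,v)$ was not discarded at the earlier iterations (the attachment edge chosen for $v$ at an earlier level is no heavier than $v$'s lightest edge to $u$'s cluster); ``the heaviest cluster-edge used to form the cluster'' is not by itself comparable to $\omega(u,v)$, so this step needs that justification. (2) The deterministic claim is only asserted: \cite{RTZ05} does not derandomize via the method of conditional expectations but via a different deterministic cluster-selection procedure, and preserving the $O(k\cdot m)$ running time is exactly the delicate part there. Since the present paper uses the theorem purely as a black box, the honest options are to cite \cite{RTZ05} for that component or to reproduce its argument; a one-line appeal to conditional expectations with ``amortized constant work per edge'' does not establish it.
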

{\bf Remark:} The algorithm of \cite {BS03} is randomized, but was later derandomized in \cite{RTZ05}.
Henceforth, the algorithm provided by Theorem \ref{weighted} is deterministic.

\begin{theorem} \cite{RZ04} [weighted graphs II] \label{weighted2}
For any weighted graph $G = (V,E)$ and any integer $k \ge 1$,
a $(2k-1)$-spanner with $O(n^{1+{1}/{k}})$ edges can be built in $O(k \cdot n^{2+1/k})$ time.
\end{theorem}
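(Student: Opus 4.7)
The plan is to build the spanner by deterministically simulating the Thorup--Zwick $(2k-1)$-spanner framework with pivots and bunches. I would first construct a nested chain of vertex samples $V = A_0 \supseteq A_1 \supseteq \cdots \supseteq A_{k-1} \supseteq A_k = \emptyset$ with $|A_i| = O(n^{1-i/k})$. For every vertex $v \in V$ and every level $i$, let the pivot $p_i(v)$ be a closest element of $A_i$ to $v$ in $G$, and define the bunch $B(v) = \bigcup_{i<k} \{w \in A_i \setminus A_{i+1} : d_G(v,w) < d_G(v, p_{i+1}(v))\}$. The output spanner $H$ consists, for every $v$, of the first edge of a shortest path from $v$ to each vertex of $B(v) \cup \{p_i(v) : i<k\}$. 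Stretch $2k-1$ then follows from the standard inductive argument: for any input edge $(u,v)$ of weight $w$, there is a smallest level $i$ at which $p_i(u) \in B(v)$ (or symmetrically $p_i(v) \in B(u)$), and walking $u \to p_1(u) \to \cdots \to p_i(u) \to v$ yields a path of weight at most $(2k-1) w$ inside $H$ by iterated triangle inequality.

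To choose the $A_i$'s deterministically, I would proceed level by level. Having fixed $A_{i-1}$, from every $v \in V$ run a truncated Dijkstra in $G$ that halts as soon as $\lceil n^{1/k}\rceil$ vertices of $A_{i-1}$ have been extracted; this yields a candidate set $N_i(v) \subseteq A_{i-1}$ of size $\lceil n^{1/k}\rceil$. Now select $A_i \subseteq A_{i-1}$ so that $A_i \cap N_i(v) \neq \emptyset$ for every $v$; this forces the level-$i$ contribution to $B(v)$ to contain strictly fewer than $n^{1/k}$ vertices, giving $|B(v)| = O(n^{1/k})$ once the telescoping among levels is taken into account. Since a random $A_i \subseteq A_{i-1}$ of density $n^{-1/k}$ hits every $N_i(v)$ in expectation, an $A_i$ of size $O(n^{1-i/k})$ with the covering property can be produced deterministically by the method of conditional expectations applied to the smoothed potential $\Phi(S) = \sum_v (1 - n^{-1/k})^{|S \cap N_i(v)|}$, without incurring the $\log n$ factor that a naive union-bound would introduce.

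For the running time, each truncated Dijkstra explores only the ball of $v$ containing its $n^{1/k}$ closest $A_{i-1}$-vertices, and a standard amortization shows the total edge budget across the $n$ sources at a single level is $O(n^{2+1/k})$; summed over the $k$ levels this is $O(k \cdot n^{2+1/k})$. The edges of $H$ fall out of these same Dijkstras: for every $v$ and every $w \in B(v)$ we emit the first edge on the recorded shortest path, giving $\sum_v |B(v)| = O(n^{1+1/k})$ edges in all. The main obstacle is the joint tightness of the two derandomizations: we need $|A_i| = O(n^{1-i/k})$ with no stray $\log n$ factor while simultaneously ensuring $|B(v)| = O(n^{1/k})$ (rather than $O(k \cdot n^{1/k})$ as a crude level-by-level count would give). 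The calibration of the potential $\Phi$ together with a telescoping charging of bunch sizes across consecutive levels is what eliminates both overheads, and is the step on which the whole construction hinges.
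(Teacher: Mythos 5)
This statement is not proved in the paper at all: it is quoted as a black box from Roditty and Zwick \cite{RZ04}, whose algorithm is the ``modified greedy'' alluded to in the introduction --- edges are processed in non-decreasing order of weight, an edge $(u,v)$ is added only if the current spanner has no sufficiently short $u$--$v$ path (tested via their dynamic approximate-distance data structures), so the spanner has girth greater than $2k$ and the $O(n^{1+1/k})$ size bound follows from the Moore/girth bound. Your Thorup--Zwick-style sampling route is therefore a genuinely different approach, but as written it has two gaps exactly at the steps you yourself identify as the crux. First, the derandomized choice of $A_i$: a random subset of $A_{i-1}$ of density $n^{-1/k}$ hits a fixed set $N_i(v)$ of size $\lceil n^{1/k}\rceil$ only with constant probability, so the expected value of your potential $\Phi(S)=\sum_v (1-n^{-1/k})^{|S\cap N_i(v)|}$ is $\Theta(n)$; conditional expectations can only match that expectation, i.e.\ it leaves $\Theta(n)$ sets unhit, and it cannot manufacture a set with $A_i\cap N_i(v)\neq\emptyset$ for \emph{every} $v$. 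Indeed, for worst-case families of $n$ sets of size $n^{1/k}$ any hitting set must have size $\Omega(n^{1-1/k}\log n)$, so the ``no stray $\log n$'' claim is not just unproved but false if you insist on covering every $N_i(v)$; you would have to fall back on the Thorup--Zwick analysis in which some bunches are large and only their total size is controlled.

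Second, even granting bunches of size $O(n^{1/k})$ per level, summing over the $k$ levels gives $\sum_v |B(v)| = O(k\cdot n^{1+1/k})$, which is the bound all sampling/bunch-based constructions (including the derandomization of Roditty, Thorup and Zwick \cite{RTZ05}, cited as Theorem \ref{weighted} in this paper) actually achieve. The ``telescoping charging across consecutive levels'' that is supposed to delete the factor $k$ is asserted but never described, and removing that factor is precisely the known barrier that separates cluster/bunch constructions from girth-based greedy ones; it is the reason \cite{RZ04} go through a greedy algorithm rather than through sampling. (The running-time amortization is also shaky at levels $i\ge 2$, where the truncated Dijkstra from $v$ may scan far more than $n^{1/k}$ vertices before extracting $n^{1/k}$ elements of $A_{i-1}$, but this is secondary.) So the proposal, as it stands, establishes at best a $(2k-1)$-spanner with $O(k\cdot n^{1+1/k})$ edges --- Theorem \ref{weighted}, not Theorem \ref{weighted2}; to get $O(n^{1+1/k})$ edges you should argue via girth as in the greedy-type algorithm of \cite{RZ04}.
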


\begin{theorem} \cite{Elkin11} [integer-weighted graphs] \label{weightedint}
For any integer-weighted graph $G = (V,E)$ and any integer $k \ge 1$,
a $(2k-1)$-spanner with \emph{expected} $O(k \cdot n^{1+{1}/{k}})$ edges can be built in $O(SORT(m))$ time,
where $SORT(m)$ is the time needed to sort $m$ integers.
\end{theorem}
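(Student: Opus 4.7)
The plan is to implement the randomized Baswana--Sen algorithm \cite{BS03} using sorting as a primitive. Recall that Baswana--Sen runs in $k$ phases, maintaining a sequence of clusterings $C_0, C_1, \ldots, C_{k-1}$, where $C_{i+1}$ is obtained from $C_i$ by sampling each cluster independently with probability $n^{-1/k}$. In phase $i$, every vertex $v$ lying in a non-surviving cluster contributes to the spanner a carefully chosen set of lightest edges to adjacent clusters of $C_i$. The correctness analysis of \cite{BS03} goes through unchanged and yields stretch $2k-1$ and expected $O(k \cdot n^{1+1/k})$ edges; the only thing to redo is the running-time accounting.

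My approach is to first sort $E$ by weight in $O(SORT(m))$ time, using the fact that the weights are integers so that the sorting primitive can genuinely be invoked on $m$ integers (e.g., via radix sort). After this preprocessing, each phase-level primitive reduces to a small number of sorting operations on derived records of total size $O(m)$: propagating the current cluster identifier to the endpoints of each edge is a sort--merge join of $E$ with the vertex-to-cluster table, and finding the lightest edge from each vertex $v$ to each adjacent cluster reduces to sorting by the key $(\text{vertex}, \text{adjacent cluster})$---using either weight as a secondary key or stability of the initial weight-sort---and retaining the first edge per group. The Baswana--Sen selection rule at each vertex (whether any adjacent cluster survives, and if so, adding only edges lighter than the lightest edge to a surviving cluster) is then a local decision per group.

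The main obstacle is to bound the aggregate cost across all $k$ phases by $O(SORT(m))$, rather than the naive $O(k \cdot SORT(m))$. I would handle this by a geometric-decay argument on the active edge set: once an edge has been either inserted into the spanner or eliminated by the selection rule of its phase, it is discarded; moreover, edges both of whose endpoints lie in non-surviving clusters can, upon their processing in the current phase, be resolved once and for all. Since clusters survive with probability $n^{-1/k}$, the expected number of edges carried over to the next phase shrinks by a factor of roughly $n^{-1/k}$. Summing $SORT(\cdot)$ of a geometric sequence of sizes yields a bound dominated by the initial $O(SORT(m))$ term, and careful reuse of the already-sorted weight order avoids repeated weight-sorts in subsequent phases. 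The remaining verification is that each phase's sorting-based operations can indeed be invoked on just the current active records without reshuffling the entire edge list, which follows from standard external-memory bookkeeping.
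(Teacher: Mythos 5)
There is a genuine gap, and it sits exactly where you put the main burden of the proof: the claim that ``the expected number of edges carried over to the next phase shrinks by a factor of roughly $n^{-1/k}$'' is false for Baswana--Sen. What decays geometrically per phase is the expected number of \emph{clusters}, not the number of clustered vertices or of alive inter-cluster edges. A vertex whose cluster is not sampled does not disappear: if it is adjacent to any sampled cluster it joins one, and it \emph{retains} all its edges to adjacent clusters whose lightest connecting edge is at least as heavy as the edge by which it joined; vertices inside sampled clusters retain essentially all their inter-cluster edges. In a dense graph almost every vertex is adjacent to some sampled cluster in every phase, so the active edge set can remain $\Theta(m)$ for $\Theta(k)$ phases (this is why the expected running time of Baswana--Sen is $O(k\cdot m)$, as in Theorem~\ref{weighted}, and not $O(m)$). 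Consequently your accounting gives only $O(SORT(m) + k\cdot m)$, or $O(k\cdot SORT(m))$ if you re-sort per phase, which does not establish the claimed $O(SORT(m))$ bound once $k$ is super-constant (e.g., for $k=\log n$ one has $k\cdot m = m\log n$, while $SORT(m)$ is $O(m\sqrt{\log\log n})$ expected).

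The cited result from \cite{Elkin11} is proved by a different mechanism: after the single sort of the integer weights, one runs a \emph{one-pass} clustering algorithm on the weight-sorted stream that decides the fate of each edge in $O(1)$ time upon arrival and never revisits it (this is the algorithm of Theorem~\ref{th:elkin}; processing edges in nondecreasing weight order is what lets the stretch analysis go through for weighted graphs). So the total time is $O(SORT(m)+m)=O(SORT(m))$ because every edge is touched once after sorting, not because the active edge set shrinks across $k$ phases. To repair your proposal you would need either such a one-shot, $O(1)$-per-edge decision rule, or a genuinely different argument that the total volume of records processed over all phases is $O(m)$; the phase structure of Baswana--Sen by itself does not provide it.
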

{\bf Remark:} 
The algorithm of \cite{Elkin11} is randomized: while 
the guarantees $2k-1$ and $O(SORT(m))$ on the stretch and   running time, respectively, are deterministic,
the guarantee 
$O(k \cdot n^{1+{1}/{k}})$ on the size of the spanner is in expectation.
The fastest known randomized \cite{HT02} (respectively, deterministic \cite{Han04}) algorithm for sorting $m$
integers requires expected $O(m \cdot \sqrt{\log \log n})$ (resp., worst-case $O(m \cdot \log \log n)$) running time.

We will henceforth refer to the algorithms of Theorems \ref{unweighted}, \ref{weighted}, \ref{weighted2} and \ref{weightedint}
as Algorithms $UnwtdSp$, $WtdSp$, $WtdSp_2$, and $IntWtdSp$, respectively.

\section{The Basic Construction} \label{first}
\vspace{-0.06in}
\subsection{The Algorithm} \label{sec:alg}
\vspace{-0.06in}
In this section we devise an algorithm $LightSp$ that builds a light spanner efficiently.

Let $G = (V,E)$ be an arbitrary weighted graph, with $n = |V|, m = |E|$. Let $k \ge 1$ be an integer parameter
that determines the stretch bound of the spanner.


We start with building an MST (or an $O(1)$-approximate MST) $T = (V,E_T)$ for $G$.  
Let $M_T = (V,dist_T)$ be the (shortest-path) metric induced by $T$.   
We then compute the Hamiltonian path $\cL = (v_1,v_2,\ldots,v_n)$ of $M_T$ drawn by taking the preorder traversal of $T$.
Observe that for each $1 \le i \le n-1$, $dist_\cL(v_i,v_{i+1}) = dist_T(v_i,v_{i+1})$.    
More generally, for any pair $u,v \in V$ of vertices, $dist_\cL(u,v) \ge dist_T(u,v)$.  
Define $L = \omega(\cL)$; it is well known (\cite{CLRS90}, ch.\ 36) that $L \le 2 \cdot \omega(T)$, and so $L = O(\omega(MST(G)))$.

Let $1 < \rho \le 2$ be some parameter to be determined later, and define $\ell = \lceil \log_\rho n \rceil$.
We partition the edges of $E$ into $\ell + 1$ edge sets.
The first edge set $E_0$ contains all edges  with weight in the range $W_0 = (0, \frac{L}{n}]$.
For each $1 \le j \le \ell$,  
the $j$th edge set $E_j$ contains all edges with weight in the range
$W_j =  (\xi_j, \rho \cdot \xi_j]$, where $\xi_j = \rho^{j-1} \cdot \frac{L}{n}$.

Define $V_0 = V$, and let $n_0 = |V_0| = n$. 
We use Algorithm $WtdSp$ to build a $(2k-1)$-spanner $H'_0 = (V_0, E'_0)$ for $G_0 = (V,E_0)$.

Algorithm $LightSp$ proceeds in $\ell$ iterations $j = 1,2,\ldots,\ell$.
\begin{enumerate}
\item 
First, 
we divide the path $\cL$ into $n_j = \frac{q \cdot L}{\xi_j} = \frac{q\cdot n}{\rho^{j-1}}$
intervals of length $\mu_j = \frac{\xi_j}{q}$ each, where $\frac{1}{2k-1} < q < k$ is some parameter to be determined later.
(Notice that $n_j > n$, for all $1 \le j < \log_\rho q + 1$.)
These intervals induce a partition of   $V$ in the obvious\footnote{A vertex that ``lies'' on the boundary of two consecutive intervals can be assigned
to either one of them arbitrarily.} way; denote these intervals  
and the corresponding
vertex sets by 
$I^{(1)}_j,I^{(2)}_j,\ldots,I^{(n_j)}_{j}$
and $V^{(1)}_j,V^{(2)}_j,\ldots,V^{(n_j)}_{j}$, respectively. 
While computing this partition of $V$, we will store the index $i$ of the vertex set $V^{(i)}_j$ to which any vertex $v \in V$ 
belongs in some variable $ind_j(v)$, for each $1 \le i \le n_j$.
For each interval $I^{(i)}_j$, $1 \le i \le n_j$, we define
a new ``dummy'' vertex $r^{(i)}_j$ (i.e., not present in $G$) which will serve as
the \emph{representative} of the interval $I^{(i)}_j$. 
The vertices $r^{(i)}_j$ (respectively, intervals $I^{(i)}_j$), $1 \le i \le n_j$, will also be referred to as the \emph{$j$-level representatives}
(resp., \emph{$j$-level intervals}).
For any  vertex $v \in V$, 
its $j$-level representative is given by $r_j(v) = r^{(ind_j(v))}_j$.   
We also define a dummy vertex set $V_j$, which contains all the $j$-level representatives.
Observe that
$|V_j| ~\le~ \min\{n,n_j\} ~=~  \min\left\{n,\frac{q\cdot n}{\rho^{j-1}}\right\}$.
\item We then compute an edge set $\tilde E_j$ over $V_j$ as follows.
First, we remove from the $j$th edge set $E_j$ all edges that have both their endpoints in the same $j$-level interval
to obtain the edge set $\bar E_j$.
For each $e = (u,v) \in \bar E_j$,
let $\hat e = (r_j(u),r_j(v))$ be a  dummy edge of weight $\omega(e)$ between the corresponding $j$-level representatives,
and note that $r_j(u) \ne r_j(v)$;
we say that $\hat e = r(e)$ is the \emph{representative edge} of $e$, and  that $e = s(\hat e)$ is the \emph{source edge} of $\hat e$.
Define $\hat E_j = \{\hat e = r(e) ~|~ e \in \bar E_j\}$, and let $\hat G_j = (V_j,\hat E_j)$ be the corresponding multi-graph;
note that $\hat G_j$ does not contain self-loops.
Next, we transform $\hat G_j$ into a simple graph $\tilde G_j = (V_j, \tilde E_j)$ by removing all
edges but the one of minimum weight, for every pair of incident vertices in $\hat G_j$. 
During this process, we store with every edge $\tilde e \in \tilde E_j$ its source edge $e = s(\tilde e) \in \bar E_j$.
\item We proceed to building a $(2k-1)$-spanner $H'_j = (V_j, E'_j)$ for $\tilde G_j = (V_j, \tilde E_j)$
using Algorithm $WtdSp$.
\item Next, we replace each edge $e' \in E'_j$ by its source edge $s(e')$.
(Note that $e' \in E'_j \subseteq \tilde E_j$ is a dummy edge that connects some two distinct $j$-level intervals;
denote them by $I$ and $I'$. Moreover, its weight $\omega(e')$ is the smallest weight of an $E_j$-edge 
which connects $I$ and $I'$. Hence the source edge $s(e')$ is the minimum weight $E_j$-edge which
connects the intervals $I$ and $I'$.)
Denote the resulting edge set by $E^*_j$ and define $H^*_j = (V_j,E^*_j)$; note that $E^*_j \subseteq \bar E_j \subseteq E_j$.
We will refer to $H^*_j$ as the \emph{$j$-level spanner}.
\end{enumerate}

Finally, we define $E^* = E_T \cup E'_0 \cup \bigcup_{j=1}^\ell E^*_j$, and return the graph $H^* = (V,E^*)$ as our spanner.

\subsection{The Analysis} \label{sec:second}

In this section we analyze the properties of the constructed graph $H^*$. 
\vspace{0.1in}
\\{\bf 2.2.1~ Stretch.~}
We start with analyzing the stretch of the graph $H^* = (V,E^*)$. 


Define $str(k,q) = (2k-1)\cdot (1+\frac{2}{q}) + \frac{2}{q}$. 
Next, we show that the stretch of $H^*$ is
at most $str(k,q)$. 
\\In other words, we prove that $dist_{H^*}(u,v) \le  str(k,q) \cdot dist_G(u,v)$,
for any edge $e = (u,v) \in E$.


If $e \in E_0$, then we have $dist_{H^*}(u,v) ~\le~ dist_{H'_0}(u,v) ~\le~ (2k-1) \cdot dist_G(u,v)  ~\le~  str(k,q) \cdot dist_G(u,v).$

We henceforth assume that $e \in E_j$, for some   $1 \le j \le \ell$.
The next lemma shows that $dist_{H^*}(u,v) \le str(k,q) \cdot \omega(e)$.
(This lemma in conjunction with the triangle inequality
imply that $dist_{H^*}(u,v) \le str(k,q) \cdot dist_G(u,v)$,
for any edge $e = (u,v) \in E$, which provides the required stretch bound on $H^*$.)
\begin{lemma} \label{lm:stretch}
There is a path of length at most $str(k,q) \cdot \omega(e)$
in $H^*_j \cup T = (V,E^*_j \cup E_T)$
between  $u$ and $v$. 
\end{lemma}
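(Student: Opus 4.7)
The plan is to split into two cases based on whether $u$ and $v$ lie in the same $j$-level interval. If they do (say $u,v\in V^{(i)}_j$), then since $I^{(i)}_j$ has $\cL$-length $\mu_j = \xi_j/q$ and $dist_T \le dist_\cL$, we have $dist_T(u,v) \le \mu_j < \omega(e)/q$, which is well within $str(k,q)\cdot\omega(e)$. So the main case is when $u$ and $v$ lie in distinct $j$-level intervals, which forces $e \in \bar E_j$ and thereby puts $\hat e = (r_j(u),r_j(v))$ into $\hat E_j$ with weight $\omega(e)$. By construction of $\tilde G_j$, there is an edge between $r_j(u)$ and $r_j(v)$ in $\tilde E_j$ of weight at most $\omega(e)$ (the minimum-weight edge of $\bar E_j$ joining the two intervals, and $e$ is a candidate).

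Next, I would invoke the spanner property of $H'_j$: since $H'_j$ is a $(2k-1)$-spanner of $\tilde G_j$, there is a path $P' = (w_0 = r_j(u), w_1, \ldots, w_p = r_j(v))$ in $H'_j$ with edges $e'_1,\ldots,e'_p$ of total weight at most $(2k-1)\omega(e)$. I then ``project'' $P'$ into $G$ by replacing each $e'_i$ by its source edge $s(e'_i) \in E^*_j$, which preserves weight and whose endpoints $a_i, b_i$ lie in the intervals $V^{(\phi(w_{i-1}))}_j$ and $V^{(\phi(w_i))}_j$, respectively. Since $b_i$ and $a_{i+1}$ both lie in the same $j$-level interval (that of $w_i$), they can be joined in $T$ at cost at most $\mu_j$ via $dist_T \le dist_\cL \le \mu_j$; likewise $u$ to $a_1$ and $b_p$ to $v$ cost at most $\mu_j$ each. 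This produces a $u$-to-$v$ walk in $H^*_j \cup T$ of total length
\[
\sum_{i=1}^p \omega(s(e'_i)) + (p+1)\mu_j \;\le\; (2k-1)\omega(e) + (p+1)\mu_j.
\]

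To bound the gluing term, I would use that every edge of $\tilde E_j$ inherits a weight strictly greater than $\xi_j$ from $E_j$, so $p\cdot\xi_j < \sum_i \omega(e'_i) \le (2k-1)\omega(e)$, giving $p\mu_j = p\xi_j/q \le (2k-1)\omega(e)/q$; combined with $\mu_j = \xi_j/q < \omega(e)/q$ (because $\omega(e) > \xi_j$), the total is easily seen to fit inside $str(k,q)\cdot\omega(e)$. The main obstacle, and the delicate part of the argument, is precisely this projection/gluing step: because the source edges selected by the black-box spanner on $\tilde G_j$ need not share endpoints in $G$, the spanner path does not translate into a path of $G$, and one must bridge the gaps using short $T$-walks inside each interval without inflating the stretch. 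The parameter choice $\mu_j = \xi_j/q$ keeps each individual bridging cost below $\omega(e)/q$, while the lower bound $\xi_j$ on edge weights in $\tilde E_j$ controls the number $p$ of gaps, so that the two contributions combine into a small additive distortion on top of the base stretch $2k-1$.
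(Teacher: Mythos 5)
Your proof is correct and takes essentially the same route as the paper: the same case split, the same replacement of the edge by $\tilde e\in\tilde E_j$ and the $(2k-1)$-spanner path in $H'_j$, the same projection onto source edges in $E^*_j$, and the same gluing by $T$-paths of length at most $\mu_j=\xi_j/q$ inside each interval. The only (immaterial) difference is bookkeeping: the paper charges each spanner-path edge its two bridges via $2\xi_j/q\le \frac{2}{q}\,\omega(e'_i)$, whereas you bound the total number of gaps by $p+1$ using $p\,\xi_j\le(2k-1)\,\omega(e)$, which yields a slightly tighter constant still within $str(k,q)\cdot\omega(e)$.
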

\begin{proof}
Since $e$ is in $E_j$, its weight is in $W_j = [\xi_j,\rho \cdot \xi_j]$. In particular, we have $\omega(e) \ge \xi_j$.

For any pair $x,y \in V$ of vertices, let $\Pi_T(x,y)$ denote the path in the MST $T$ between them.

Suppose first that $u$ and $v$ belong to the same $j$-level interval. Note that the length of all $j$-level intervals
is $\mu_j = \frac{\xi_j}{q}$, and so $dist_T(u,v) \le dist_\cL(u,v) \le \frac{\xi_j}{q}$. Hence, the weight
of the path $\Pi_T(u,v)$ between $u$ and $v$ in $T$
is at most $$\frac{\xi_j}{q} ~\le~ \frac{\omega(e)}{q} ~<~ (2k-1) \cdot \omega(e) ~<~ str(k,q) \cdot \omega(e).$$
(The one before last inequality holds as $q > \frac{1}{2k-1}$.)
Thus $\Pi_T(u,v)$ is a path of the required length
in $H^*_j \cup T$ between $u$ and $v$.

We henceforth assume that $u$ and $v$ belong to distinct $j$-level intervals.
Let $\tilde e = (r_j(u),r_j(v)) \in \tilde E_j$ be the respective edge in $\tilde E_j$,
where $r_j(u)$ and $r_j(v)$ are the $j$-level representatives of $u$ and $v$, respectively.
By the construction of $\tilde E_j$, we have $\omega(\tilde e) \le \omega(e)$.
Let $\Pi'(r_j(u),r_j(v)) = (r_j(u) = u'_0,u'_1,\ldots,u'_h = r_j(v))$ be a path of length at most $(2k-1)\cdot dist_{\tilde G_j}(r_j(u),r_j(v))$ in $H'_j$
between $r_j(u)$ and $r_j(v)$. We have 
\begin{equation} \label{stretchspan}  
\omega(\Pi'(r_j(u),r_j(v))) ~\le~ (2k-1)\cdot dist_{\tilde G_j}(r_j(u),r_j(v)) ~\le~ (2k-1) \cdot \omega(\tilde e) ~\le~ (2k-1) \cdot \omega(e).
\end{equation}
Notice that the edges of $H'_j$ (and of $\Pi'(r_j(u),r_j(v))$)
are not taken as is to the graph $H^*_j$. 
That is, each edge $e'_i = (u'_i, u'_{i+1}) \in H'_j$ is replaced
by its source edge $s(e'_i) = (u_i,u_{i+1}) \in E^*_j$.
To connect the two endpoints  $u'_i$ and $u'_{i+1}$ of the original edge $e'_i \in E'_j$ in $H^*_j \cup T$,
we take the path  $\Pi(u'_i, u'_{i+1}) = \Pi_T(u'_i,u_i) \circ (u_i, u_{i+1}) \circ \Pi_T(u_{i+1},u'_{i+1})$
obtained from the concatenation of the path $\Pi_T(u'_i,u_i)$, the edge $s(e'_i) = (u_i, u_{i+1})$, and the path $\Pi_T(u_{i+1},u'_{i+1})$.
Observe that $u'_i$ is the $j$-level representative of $u_i$, i.e., $u'_i = r_j(u_i)$. In particular, they belong to the
same $j$-level interval, and so $\omega(\Pi_T(u'_i,u_i)) \le \frac{\xi_j}{q}$.
Similarly, $\omega(\Pi_T(u_{i+1},u'_{i+1})) \le \frac{\xi_j}{q}$.
Also, since $s(e'_i) \in E^*_j \subseteq E_j$, it holds that $\omega(e'_i) \ge \xi_j$.
Recall that the weight of the source edge $s(e'_i) = (u_i, u_{i+1})$ of $e'_i$ is equal to that of $e'_i = (u'_i,u'_{i+1})$,
and so $\omega(s(e'_i)) = \omega(e'_i) \ge \xi_j$.
Consequently, the weight $\omega(\Pi(u'_i,u'_{i+1}))$ of the path $\Pi(u'_i,u'_{i+1})$ satisfies
\begin{eqnarray}  \nonumber
\omega(\Pi(u'_i,u'_{i+1})) &=& \omega(\Pi_T(u'_i,u_i)) + \omega(u_i, u_{i+1}) + \omega(\Pi_T(u_{i+1},u'_{i+1}))
 \\ \label{stretchpath} &\le&   \frac{\xi_j}{q} + \omega(e'_i) +  \frac{\xi_j}{q} ~\le~ 
 \omega(e'_i) \cdot \left(1+ \frac{2}{q}\right).
\end{eqnarray}

Let $\Pi^*(r_j(u),r_j(v))$ be the path in $H^*_j \cup T$ between $r_j(u)$ and $r_j(v)$ obtained from $\Pi'(r_j(u),r_j(v))$
by replacing each edge $(u'_i,u'_{i+1})$ in it by the path $\Pi(u'_i,u'_{i+1}))$ as described above.
(See Figure \ref{stretchfig}.(I).)
Equations (\ref{stretchspan}) and (\ref{stretchpath}) yield $$\omega(\Pi^*(r_j(u),r_j(v))) ~\le~ \omega(\Pi'(r_j(u),r_j(v)))\cdot \left(1+\frac{2}{q}\right) ~\le~ (2k-1) \cdot \omega(e) \cdot \left(1+\frac{2}{q}\right).$$

\begin{figure*}[htp]
\begin{center}
\begin{minipage}{\textwidth} 
\begin{center}
\setlength{\epsfxsize}{7in} \epsfbox{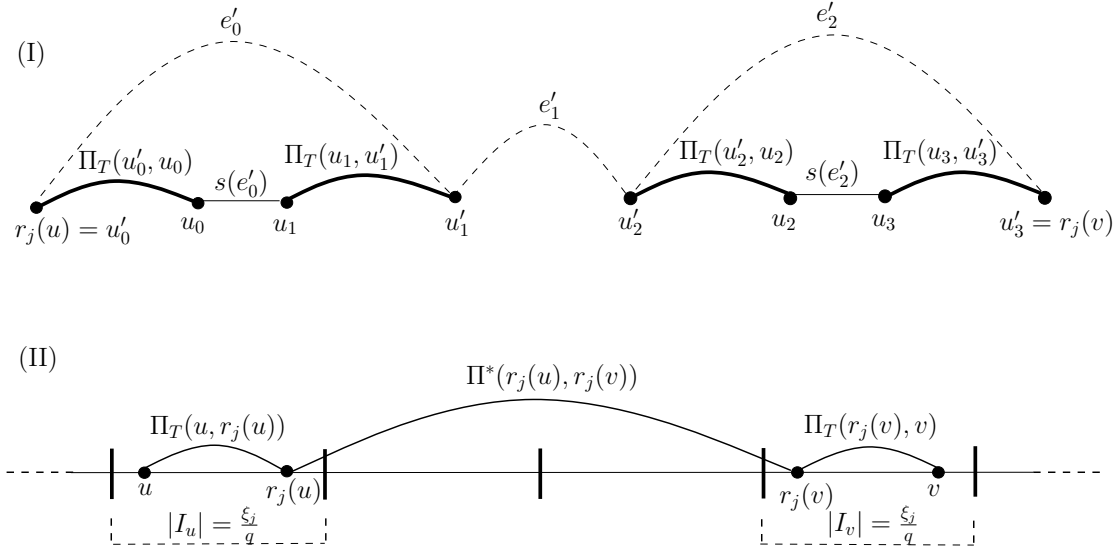}
\end{center}
\end{minipage}
\caption[]{ \label{stretchfig} \footnotesize \sf (I) An illustration of the path $\Pi^*(r_j(u),r_j(v))$. 
Edges $e'_i = (u'_i,u'_{i+1})$ of the path $\Pi'(r_j(u),r_j(v))$ in $H'_j$ are depicted by dashed curved lines.
The MST-paths $\Pi_T(u'_i,u_i) \in T$ between vertices $u'_i$ and $u_i$ that belong to the same interval are depicted by thick curved lines.
All the other edges (i.e., edges $(u_i, u_{i+1})$ in $H^*_j$) are depicted by thin straight lines.
(II) An illustration of the path $\Pi^*(u,v)$. 
The vertices $u$ and $r_j(u)$ (respectively, $v$ and $r_j(v)$) belong to the same
$j$-level interval, denoted here by $I_u$ (resp., $I_v$); the lengths of
these intervals (denoted here by $|I_u|$ and $|I_v|$)
are both equal to $\frac{\xi_j}{q}$.
}
\end{center}
\vspace{-0.1in}
\end{figure*}

Finally, let $\Pi^*(u,v) = \Pi_T(u,r_j(u)) \circ \Pi^*(r_j(u),r_j(v)) \circ \Pi_T(r_j(v),v)$
be the path in $H^*_j \cup T$ between $u$ and $v$ obtained from the concatenation of the paths $\Pi_T(u,r_j(u))$, $\Pi^*(r_j(u),r_j(v))$ and $\Pi_T(r_j(v),v)$.
(See Figure \ref{stretchfig}.(II) for an illustration.)
Since $u$ and $r_j(u)$ belong to the same $j$-level interval, we have $\omega(\Pi_T(u,r_j(u))) \le \frac{\xi_j}{q}$.
Similarly, $\omega(\Pi_T(r_j(v),v)) \le \frac{\xi_j}{q}$. Recall that $\omega(e) \ge \xi_j$.
We conclude that 
\begin{eqnarray*}
\omega(\Pi^*(u,v)) 
&=& \omega(\Pi_T(u,r_j(u))) + \omega(\Pi^*(r_j(u),r_j(v))) + \omega(\Pi_T(r_j(v),v))
\\ &\le&   \frac{\xi_j}{q} + (2k-1) \cdot \omega(e) \cdot \left(1+\frac{2}{q} \right) +  \frac{\xi_j}{q} ~\le~ \omega(e)\cdot \left((2k-1) \cdot \left(1+ \frac{2}{q}\right) + \frac{2}{q} \right) 
\\ &=& str(k,q) \cdot \omega(e).
\end{eqnarray*}
Hence $\Pi^*(u,v)$ is a path of length at most $str(k,q) \cdot \omega(e)$ between $u$ and $v$ in $H^*_j \cup T$, as required.
\QED
\end{proof}
\vspace{0.1in}
{\bf 2.2.2~ Number of Edges.~}
The next lemma bounds the number of edges in the spanner $H^* = (V,E^*)$.
\begin{lemma} \label{edgebound}
$|E^*| = O\left(k \cdot n^{1+{1}/{k}} \cdot  \left(1 +  \frac{q}{\rho-1}\right)\right)$.
\end{lemma}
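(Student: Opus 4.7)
I will bound $|E^*|$ by summing the contributions of the MST, the initial spanner $H'_0$, and the $j$-level spanners $H^*_j$ separately, and then analyzing the sum $\sum_{j=1}^{\ell} |E^*_j|$ by a geometric-series argument.

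First, $|E_T| = n-1 = O(n)$, and by Theorem~\ref{weighted} applied to $G_0 = (V,E_0)$, we have $|E'_0| = O(k \cdot n^{1+1/k})$. For each $j \ge 1$, Step~4 replaces each edge of $E'_j$ by a single source edge, so $|E^*_j| \le |E'_j|$. Since $H'_j$ is the output of Algorithm $WtdSp$ on $\tilde G_j = (V_j, \tilde E_j)$, Theorem~\ref{weighted} gives
$$|E^*_j| \le |E'_j| = O\!\left(k \cdot |V_j|^{1+1/k}\right).$$
Recall that $|V_j| \le \min\{n,\, qn/\rho^{j-1}\}$, so the bound is governed by whichever of the two quantities is smaller.

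Let $j_0 = \lceil \log_\rho q \rceil + 1$ be the threshold index at which the ``capped'' regime $|V_j| = n$ transitions into the ``geometric'' regime $|V_j| = qn/\rho^{j-1} < n$. I split the sum accordingly:
\begin{equation*}
\sum_{j=1}^{\ell} |E^*_j| \;=\; \underbrace{\sum_{j=1}^{j_0-1} |E^*_j|}_{(\mathrm{I})} \;+\; \underbrace{\sum_{j=j_0}^{\ell} |E^*_j|}_{(\mathrm{II})}.
\end{equation*}
For $(\mathrm{I})$, each term contributes $O(k \cdot n^{1+1/k})$, and there are $O(\log_\rho q)$ terms; using the elementary bound $\log_\rho q \le q/(\rho-1)$ (valid from $\ln \rho \ge (\rho-1)/\rho$ and $\ln q \le q$), I obtain $(\mathrm{I}) = O\!\left(k \cdot n^{1+1/k} \cdot \frac{q}{\rho-1}\right)$. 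For $(\mathrm{II})$, substituting $|V_j| = qn/\rho^{j-1}$ and summing the geometric series gives
$$(\mathrm{II}) \;=\; O\!\left(k \cdot \sum_{j=j_0}^{\ell} \left(\tfrac{qn}{\rho^{j-1}}\right)^{1+1/k}\right)
\;=\; O\!\left(\frac{k \cdot n^{1+1/k}}{\rho^{1+1/k}-1}\right),$$
where the last step uses that $\rho^{j_0-1} = \Theta(q)$ so the leading term is $O(n^{1+1/k})$, and $\sum_{i\ge 0} \rho^{-i(1+1/k)} = 1/(1-\rho^{-(1+1/k)})$. Since $1+1/k \ge 1$, one has $\rho^{1+1/k}-1 \ge \rho - 1$, yielding $(\mathrm{II}) = O\!\left(\frac{k \cdot n^{1+1/k}}{\rho-1}\right)$.

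Combining the bounds on $|E_T|$, $|E'_0|$, $(\mathrm{I})$, and $(\mathrm{II})$, and absorbing the $O(n)$ and $O(k \cdot n^{1+1/k})$ terms into the factor of $1$ in $(1 + q/(\rho-1))$, I obtain $|E^*| = O\!\left(k \cdot n^{1+1/k} \cdot \left(1 + \frac{q}{\rho-1}\right)\right)$, as required. The main calculational care is in verifying the geometric-sum estimate at the threshold $j_0$ and in the clean relation $\log_\rho q \le q/(\rho-1)$, which ensures that the ``capped'' regime doesn't dominate the bound for small $\rho-1$.
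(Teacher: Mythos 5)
Your proof is correct and follows essentially the same route as the paper: decompose $E^*$ into $E_T$, $E'_0$, and the level spanners, split the sum $\sum_j |E^*_j|$ at the threshold $\log_\rho q + 1$, bound the head by $O(\log_\rho q)=O(q/(\rho-1))$ terms of size $O(k\cdot n^{1+1/k})$, and bound the tail by a geometric series. The only (harmless) difference is that you anchor the geometric series at $j_0$ so the leading term is already $O(n^{1+1/k})$, whereas the paper keeps the $q^{1+1/k}$ factor and absorbs $q^{1/k}=O(1)$ using $q<k$; both yield the stated bound.
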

\begin{proof}
Recall that $|V_j| ~\le~ \min\{n,n_j\} ~=~  \min\left\{n,\frac{q\cdot n}{\rho^{j-1}}\right\} ~\le~ n$, for $1 \le j \le \ell$.
For an index $\log_\rho q +1 \le j \le \ell$, we have $\frac{q}{\rho^{j-1}} \le 1$. Hence
$n \ge \frac{q \cdot n}{\rho^{j-1}} = n_j$, and so $|V_j| \le n_j$.
By construction, $E^* = E_T \cup E'_0 \cup \bigcup_{j=1}^\ell E^*_j$.  Clearly $|E_T| = n-1$. By Theorem \ref{weighted},
$|E'_0| = O(k \cdot |V_0|^{1+{1}/{k}}) = O(k \cdot n^{1+{1}/{k}})$.
Also, it holds that $|E^*_j| = |E'_j| = O(k \cdot |V_j|^{1+{1}/{k}}) = O(k \cdot n^{1+{1}/{k}})$, for each $1 \le j \le \ell$.
Next, consider an index $\log_\rho q +1 \le j \le \ell$; in this case  $|V_j| \le n_j = \frac{q \cdot n}{\rho^{j-1}}$.
Since $q < k$, we have $q^{1/k} = O(1)$.
It follows that \begin{equation} \label{exmark}
|E^*_j| ~=~ O\left(k \cdot \left(\frac{q \cdot n}{\rho^{j-1}}\right)^{1+{1}/{k}}\right) ~=~ 
O\left(k \cdot q \cdot n^{1+{1}/{k}} \cdot \left(\frac{1}{\rho^{1+{1}/{k}}}\right)^{j-1} \right).
\end{equation}
Since $\rho > 1$, we have
$$\sum_{\log_\rho q +1 \le j \le \ell} \left(\frac{1}{\rho^{1+{1}/{k}}}\right)^{j-1} ~\le~ \sum_{i=0}^\infty \left(\frac{1}{\rho^{1+{1}/{k}}}\right)^{i}
~\le~ \sum_{i=0}^\infty \left(\frac{1}{\rho}\right)^{i} ~=~ \frac{1}{\rho - 1} + 1.$$
It follows that
\begin{eqnarray*}
|E^*| &=& |E_T| + |E'_0| + \sum_{j=1}^\ell |E^*_j| ~=~ n - 1 + O(k \cdot n^{1+{1}/{k}})+ \sum_{1 \le j < \log_\rho q +1} |E^*_j| + \sum_{\log_\rho q +1 \le j \le \ell} |E^*_j|
\\ &=& O(k \cdot n^{1+{1}/{k}}) + O(\log_\rho q \cdot k \cdot n^{1+{1}/{k}}) + O\left(k \cdot q \cdot n^{1+{1}/{k}}\right) \cdot \sum_{\log_\rho q +1 \le j \le \ell}    \left(\frac{1}{\rho^{1+{1}/{k}}}\right)^{j-1} 
\\ &=&  O\left(k \cdot n^{1+{1}/{k}} \cdot  \left(1 +  \frac{q}{\rho-1}\right)\right). 
\end{eqnarray*}
(Note that for $q > 1$, it holds that $\log_\rho q = O(\frac{q}{\rho - 1})$.) \QED
\end{proof}
\vspace{0.1in}
{\bf 2.2.3~ Weight.~}
The next lemma estimates the weight of $H^* = (V,E^*)$.
\begin{lemma} \label{wtwt1}
$\omega(E^*) = O\left(k^2  \cdot n^{{1}/{k}}  \cdot \frac{q}{\rho - 1}\right) \cdot \omega(MST(G))$.
\end{lemma}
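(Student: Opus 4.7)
The plan is to decompose the weight of $E^* = E_T \cup E'_0 \cup \bigcup_{j=1}^\ell E^*_j$ according to the construction, handle the easy pieces directly, and then do a two-regime geometric-series estimate on the per-level spanners $E^*_j$.

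First, the MST weight is already $\omega(E_T) = \omega(MST(G))$. For $E'_0$, every edge has weight at most $L/n = O(\omega(MST(G))/n)$, and by Theorem~\ref{weighted} applied to $G_0$, the number of edges is $|E'_0| = O(k \cdot n^{1+1/k})$, so $\omega(E'_0) = O(k \cdot n^{1/k}) \cdot \omega(MST(G))$. These two terms are comfortably absorbed into the claimed bound.

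The heart of the proof is the sum $\sum_{j=1}^\ell \omega(E^*_j)$. Every $E^*_j$-edge has weight in $W_j$, so $\omega(e) \le \rho \xi_j = \rho^j L/n$. By Theorem~\ref{weighted} applied to $\tilde G_j$, $|E^*_j| = |E'_j| = O(k \cdot |V_j|^{1+1/k})$, and $|V_j| \le \min\{n, n_j\} = \min\{n, qn/\rho^{j-1}\}$. I will split the sum at the threshold $j^* = \lceil \log_\rho q\rceil + 1$, using $|V_j| \le n$ for $j < j^*$ and $|V_j| \le n_j = qn/\rho^{j-1}$ for $j \ge j^*$.

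For the small-$j$ range ($1 \le j < j^*$) the per-level bound is $\omega(E^*_j) \le \rho\xi_j \cdot O(kn^{1+1/k}) = O(k n^{1/k} L) \cdot \rho^j$, and summing the geometric series up to $\rho^{j^*} = O(q)$ gives $O(k q n^{1/k} L /(\rho-1))$. For the large-$j$ range ($j \ge j^*$), substituting the sharper vertex bound and using $q^{1/k} = O(1)$ (since $q < k$) yields
\[
\omega(E^*_j) \le \rho\xi_j \cdot O\!\bigl(k \cdot (qn/\rho^{j-1})^{1+1/k}\bigr) = O\bigl(k\, q\, n^{1/k}\, L\bigr)\cdot \rho^{\,1-(j-1)/k}.
\]
This is geometric in $\rho^{-1/k}$, so its total is $O(kqn^{1/k}L) \cdot 1/(\rho^{1/k}-1)$. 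The key estimate is $1/(\rho^{1/k}-1) = O(k/(\rho-1))$: indeed $\rho^{1/k} - 1 \ge (\ln\rho)/k$, and for $\rho\in(1,2]$, $\ln\rho = \Omega(\rho-1)$, so $\rho^{1/k}-1 = \Omega((\rho-1)/k)$. This produces the extra factor of $k$ responsible for the $k^2$ in the stated bound, yielding $O(k^2 q n^{1/k}/(\rho-1)) \cdot \omega(MST(G))$ for the large-$j$ contribution.

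Combining all three contributions (MST, $E'_0$, small-$j$, and large-$j$), the last one dominates and we conclude $\omega(E^*) = O(k^2 n^{1/k} \cdot q/(\rho-1)) \cdot \omega(MST(G))$. The main obstacle in carrying this out cleanly is the $\log$-to-linear conversion $1/(\rho^{1/k}-1) = O(k/(\rho-1))$: choosing the split point $j^*$ exactly at $\log_\rho q$ is crucial so that the geometric-series starting value $\rho^{-(j^*-1)/k} = q^{-1/k} = O(1)$ does not contribute any extra factor, and so that the small-$j$ regime contributes only $O(kqn^{1/k}/(\rho-1))$ rather than something worse that would spoil the bound.
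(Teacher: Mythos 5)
Your proof is correct and is essentially the paper's own argument: the same decomposition $E^* = E_T \cup E'_0 \cup \bigcup_j E^*_j$, the same per-level bound $\omega(e)\le \rho\xi_j$ combined with the sparsity of $H'_j$, the same split at $j\approx\log_\rho q$, and the same geometric-series estimate with $\sum_{i\ge 0}\rho^{-i/k}=O(k/(\rho-1))$ supplying the extra factor of $k$ (you merely spell out the $\rho^{1/k}-1=\Omega((\rho-1)/k)$ step that the paper leaves implicit, and your small-$j$ bound carries a harmless extra $1/(\rho-1)$). The only cosmetic gap is that absorbing the $O(k\cdot n^{1/k})\cdot\omega(MST(G))$ term from $E'_0$ uses $q>\frac{1}{2k-1}$, i.e.\ $k^2q=\Omega(k)$, which the paper notes explicitly and you leave unstated.
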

\begin{proof}
First, observe that the edge weights in $E_0$ are bounded above by $\frac{L}{n}$.
Since $|E'_0| =  O(k \cdot n^{1+{1}/{k}})$, 
it follows that 
\begin{equation} \label{realfirst}
\omega(E'_0) ~\le~ O(k \cdot n^{1+1/k}) \cdot \frac{L}{n} ~=~ O(k \cdot n^{1/k} \cdot L).
\end{equation}

The edge weights in $E_j$ are bounded above by $\xi_j \cdot \rho = \rho^{j} \cdot \frac{L}{n}$, for each $1 \le j \le \ell$.
\\Consider first indices $j$ in the range $1 \le j < \log_\rho q +1$. Since $|E^*_j| = O(k \cdot n^{1+1/k})$,
we have $\omega(E^*_j) = O(k \cdot n^{1+1/k}) \cdot (\rho^{j} \cdot \frac{L}{n})
=  O(k \cdot n^{1/k} \cdot \rho^{j} \cdot L)$.  
It follows that
\begin{equation} \label{sumfirst}
\sum_{1 \le j < \log_\rho q + 1} \omega(E^*_j) ~=~  O\left(k \cdot n^{1/k} \cdot L\right) \cdot
\sum_{1 \le j < \log_\rho q + 1} \rho^{j} ~=~ O\left(k \cdot n^{1/k} \cdot L \cdot q\right).
\end{equation}
Next, consider indices $j$ with $\log_\rho q +1 \le j \le \ell$. By (\ref{exmark}), $|E^*_j| = O\left(k \cdot q \cdot n^{1+1/k} \cdot \left(\frac{1}{\rho^{1+1/k}}\right)^{j-1} \right)$.
Hence $$\omega(E^*_j) ~=~ 
O\left(k \cdot q \cdot n^{1+1/k} \cdot \left(\frac{1}{\rho^{1+1/k}}\right)^{j-1} \right) \cdot \left(\rho^j \cdot \frac{L}{n}\right)
~=~
O\left(k \cdot q \cdot n^{1/k} \cdot \left(\frac{1}{\rho^{1/k}}\right)^{j-1} \cdot L \right).$$
Observe that $$\sum_{\log_\rho q + 1 \le j \le \ell} \left(\frac{1}{\rho^{1/k}}\right)^{j-1}
~\le~ \sum_{i=0}^\infty \left(\frac{1}{\rho^{1/k}}\right)^{i} ~=~ O\left(\frac{k}{\rho - 1}\right).$$
It follows that
\begin{eqnarray} \label{sumsecond}
\sum_{\log_\rho q + 1 \le j \le \ell} \omega(E^*_j) ~=~ 
O\left(k \cdot q \cdot n^{1/k} \cdot L \right) \cdot \sum_{\log_\rho q + 1 \le j \le \ell}  \left(\frac{1}{\rho^{1/k}}\right)^{j-1}
~=~ 
O\left(k^2 \cdot q \cdot n^{1/k} \cdot L  \cdot \frac{1}{\rho - 1}\right).
\end{eqnarray}
By Equations  (\ref{realfirst}), (\ref{sumfirst}) and (\ref{sumsecond}), 
\begin{eqnarray*}
\omega(E^*) &=& \omega(T) + \omega(E'_0) + \sum_{1 \le j < \log_\rho q + 1}\omega(E^*_j) + \sum_{\log_\rho q + 1 \le j \le \ell} \omega(E^*_j)
\\ &=& \omega(T) + O(k \cdot n^{1/k} \cdot L) + O\left(k \cdot n^{1/k} \cdot L \cdot q\right) + O\left(k^2 \cdot q \cdot n^{1/k} \cdot L  \cdot \frac{1}{\rho - 1}\right) \\ &=& 
O\left(k^2  \cdot n^{1/k} \cdot \frac{q}{\rho - 1}\right) \cdot \omega(MST(G)). 
\end{eqnarray*}
(Since $q > \frac{1}{2k-1}$, we have $k^2 \cdot q = \Omega(k)$.) \QED
\end{proof}
\vspace{0.1in}
{\bf 2.2.4~ Running Time.~}
The next lemma bounds the running time of our construction.
\begin{lemma} \label{lm:time}
The graph $H^*$ can be built within time $O(k \cdot m + \min\{n\cdot \log n, m \cdot \alpha(n)\})$. 
\end{lemma}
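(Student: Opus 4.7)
The plan is to account for the cost of each phase of Algorithm $LightSp$ and show that the bound splits cleanly into the MST cost ($O(\min\{n\log n,m\alpha(n)\})$) plus the cost of the black-box spanner invocations ($O(km)$), with all other bookkeeping being at most linear in the current edge set per level. First, I would build $T$ either by Prim's algorithm with Fibonacci heaps ($O(m+n\log n)$) or by Chazelle's deterministic MST algorithm ($O(m\alpha(n))$), picking whichever is smaller; this contributes the $\min\{n\log n,m\alpha(n)\}$ term, with the additive $O(m)$ absorbed into $O(km)$. A preorder traversal of $T$ produces the Hamiltonian ordering $\cL$ in $O(n)$ time, and a single sweep along $\cL$ computes the prefix sums $pos(v_i)=\sum_{i'<i}\omega(v_{i'},v_{i'+1})$ and the total $L=\omega(\cL)$. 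A single pass over $E$ then classifies every edge into its weight class $E_j$ (from $\omega(e)$ and $L$) in $O(m)$ time.

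The key design point enabling linear per-level processing is that, thanks to the prefix sums $pos(\cdot)$, the representative $r_j(w)=\lfloor pos(w)/\mu_j\rfloor$ of any vertex $w$ at any level $j$ can be computed in $O(1)$ on demand, so no $O(n)$-per-level scan of $V$ is ever needed. For the base level I invoke $WtdSp$ (Theorem~\ref{weighted}) on $G_0$ in $O(k|E_0|)$ time. For each $j\ge 1$, I scan $E_j$, discard edges with $r_j(u)=r_j(v)$ to obtain $\bar E_j$, and deduplicate by keeping, for every unordered pair of representatives, the lightest edge together with a pointer to its source, producing $\tilde E_j$. Running $WtdSp$ on $\tilde G_j$ then costs $O(k|\tilde E_j|)\le O(k|E_j|)$, and replacing each $e'\in E'_j$ by its stored source edge is $O(|E'_j|)$. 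Summing over levels, the spanner cost is $\sum_j O(k|E_j|)=O(km)$ and the bookkeeping is $O(\sum_j|E_j|)=O(m)$. The final assembly of $E^*=E_T\cup E'_0\cup\bigcup_j E^*_j$ is linear in $|E^*|$.

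The main obstacle is implementing the deduplication of $\bar E_j$ into $\tilde E_j$ in linear time per level. Using a hash table keyed on the canonical ordered pair $(\min\{r_j(u),r_j(v)\},\max\{r_j(u),r_j(v)\})$ this is expected $O(|E_j|)$ per level. For a deterministic implementation one can instead renumber the $|V_j|\le\min\{n,n_j\}$ representatives that actually appear in $E_j$ to a contiguous range $[1,|V_j|]$ and then bucket-sort the pairs in two passes; the extra overhead is charged to the $m\alpha(n)$ branch of the $\min$, consistently with the Chazelle MST implementation. This technical point is precisely what dictates the two regimes summarized by the $\min$ in the final running time.
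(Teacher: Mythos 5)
Your proof is correct and follows essentially the same route as the paper: compute the MST in $O(\min\{m+n\cdot\log n,\, m\cdot\alpha(n)\})$, get the Hamiltonian path and constant-time level-$j$ representatives from positions along $\cL$, classify edges into the classes $E_j$ in $O(m)$, prune each level to the lightest representative edge per pair (the paper asserts this is doable deterministically in $O(m)$, and alternatively feeds the multi-graphs $\hat G_j$ directly to the black-box spanner algorithm), and charge $O(k\cdot|E_j|)$ per level to the $WtdSp$ calls for a total of $O(k\cdot m)$. One small correction: the two regimes in the $\min$ come solely from the choice of MST algorithm, not from the deduplication step — your deterministic renumber-and-bucket-sort deduplication costs $O(\sum_j(|E_j|+|V_j|))=O(m+n)$, which is absorbed into $O(k\cdot m)$ regardless of which MST branch is used.
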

\begin{proof}
The tree $T$ can be built in time $O(\min\{m + n \cdot \log n, m \cdot \alpha(m,n)\})$ \cite{Chaz00}.
Computing the Hamiltonian path $\cL$ of $M_T$, as well as its weight $L = \omega(\cL)$, take another $O(n)$ time.
Having computed $L$, we can partition the edges of $E$ to the $\ell+1$ edge sets $E_0,E_1,\ldots,E_\ell$ in $O(m)$ time
in the obvious way.
\\By Theorem \ref{weighted}, building the spanner $H'_0$
for $G_0 = (V,E_0)$ requires $O(k \cdot |E_0|) = O(k \cdot m)$ time.

Next, we bound the running time of a single iteration $j$ in the main loop, for $1 \le j \le \ell$.
\begin{enumerate}
\item For each $1 \le j \le \ell$, we allocate an array $A_j$ of size $n_j$. For each
$x \in V_j$, the entry $A_j[x]$ will contain the linked list of $x$'s neighbors
in the multi-graph $\hat G_j$  (with the weights of the representative edges).
\\For each edge $e =(u,v) \in E$ we determine the index $j \in [\ell]$ such that
$\omega(e) \in W_j$ (and so $e \in E_j$). By the locations of $u$ and $v$ on the
path $\cL$, we determine the $j$-level representatives $r_j(u)$ and $r_j(v)$
of $u$ and $v$, respectively. 
We also determine whether the edge crosses between different $j$-level intervals.
If it is the case (i.e., $e \in \bar E_j$), then we insert the edge into the array $A_j$.
(In other words, we update the linked lists of both $r_j(u)$ and $r_j(v)$ in $A_j$.)
In this way we form the multi-graphs $\hat G_1,\hat G_2,\ldots,\hat G_\ell$ within
$O(|E|) = O(m)$ time. To prune these multi-graphs into simple graphs $\tilde G_1,\tilde G_2,\ldots,\tilde G_\ell$,
we need to remove all edges but the one of minimum weight, for every pair of incident vertices in $\hat G_j$.
This can be done deterministically within $O(m)$ time and space. 
(Alternatively, one can apply one of the algorithms for constructing $(2k-1)$-spanners with
$O(k \cdot n^{1+1/k})$ edges for weighted graphs \cite{BS03,RZ04,Elkin11} 
directly on the multi-graphs $\hat G_1,\hat G_2,\ldots,\hat G_\ell$, without pruning them first.
It is easy to verify that this does not affect their guarantees for stretch, size and running time.)
\item By Theorem \ref{weighted}, the time needed to build the $(2k-1)$-spanner $H'_j = (V_j,E'_j)$ for $\tilde G_j = (V,\tilde E_j)$
is $O(k \cdot |\tilde E_j|) = O(k \cdot |E_j|)$.
Summing over all $\ell$ iterations, the overall   time is $\sum_{j=1}^\ell O(k \cdot |E_j|) = O(k \cdot m)$. 
\item Replacing each edge of $E'_j$ by its source edge takes $O(1)$ time, thus the graph $H^*_j = (V,E^*_j)$
is built in  $O(|E'_j|) = O(|E_j|)$ time.
Summing over all $\ell$ iterations, the overall   time is $\sum_{j=1}^\ell O(|E_j|) = O(m)$. 
\end{enumerate} 

It follows that the total running time of the construction is \vspace{0.06in}\\$~~~~O(\min\{m + n \cdot \log n, m \cdot \alpha(m,n)\})
+ O(k \cdot m) ~=~ 
O(k \cdot m + \min\{n\cdot \log n, m \cdot \alpha(n)\})$. \QED
\end{proof}
We remark that there are randomized algorithms for constructing MST within $O(m+n)$ time \cite{KKT95}.
Employing one of them reduces the running time of our algorithm to just $O(k \cdot m)$.
\vspace{0.1in}
\\{\bf 2.2.5~ Summary.~}
We summarize the properties of the spanner $H^* = (V,E^*)$ in the following theorem.
(We substituted $\rho=2$ to optimize the parameters of the construction.)     
\begin{theorem} \label{th:weighted}
Let $G = (V,E)$ be a weighted graph, with $n = |V|, m = |E|$.
For any integer $k \ge 1$ and any number $\frac{1}{2k-1} < q < k$, a $\left((2k-1)\cdot (1+\frac{2}{q}) + \frac{2}{q}\right)$-spanner with 
 $O\left(k \cdot n^{1+{1}/{k}} \cdot  \left(1 +  q \right)\right)$   edges and lightness $O\left(k^2  \cdot n^{1/k} \cdot q \right)$,
 can be built in  $O(k \cdot m +  \min\{n\cdot \log n, m \cdot \alpha(n)\})$ time.
\end{theorem}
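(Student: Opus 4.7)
The plan is to assemble the four ingredients already established in Section 2.2 and simply specialize the free parameter $\rho$. Since Lemmas \ref{lm:stretch}, \ref{edgebound}, \ref{wtwt1}, and \ref{lm:time} hold for \emph{every} choice of $\rho \in (1,2]$ (with $q$ constrained only by $\frac{1}{2k-1} < q < k$), the theorem is obtained by choosing a single $\rho$ that simultaneously optimizes the edge count and the lightness bound. Inspecting the dependence on $\rho$, the edges and lightness both carry the factor $\frac{1}{\rho - 1}$, while the stretch bound from Lemma \ref{lm:stretch} is independent of $\rho$ and the running time from Lemma \ref{lm:time} is likewise $\rho$-free. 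Hence setting $\rho = 2$ minimizes $\frac{1}{\rho-1}$ within the allowed range $(1,2]$ and yields the cleanest constants.

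First I would observe that Lemma \ref{lm:stretch}, applied edge-by-edge together with the triangle inequality along any shortest path in $G$, already gives the claimed stretch bound $str(k,q) = (2k-1)(1+2/q) + 2/q$, with no dependence on $\rho$; this takes care of the stretch guarantee. Next I would plug $\rho = 2$ directly into Lemma \ref{edgebound}: the bound $O\bigl(k \cdot n^{1+1/k}(1 + \frac{q}{\rho - 1})\bigr)$ collapses to $O\bigl(k \cdot n^{1+1/k}(1+q)\bigr)$, giving the stated edge count. The same substitution into Lemma \ref{wtwt1} turns $O\bigl(k^2 \cdot n^{1/k} \cdot \frac{q}{\rho-1}\bigr)$ into $O(k^2 \cdot n^{1/k} \cdot q)$, which (by dividing through by $\omega(MST(G))$) is the claimed lightness.

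Finally the running time is exactly Lemma \ref{lm:time}, which makes no reference to $\rho$ or $q$ and so is inherited verbatim. The only side condition I would double-check before concluding is that the bounds of Lemmas \ref{edgebound} and \ref{wtwt1} are valid uniformly over the allowed range of $q$: specifically that $q < k$ is used (so that $q^{1/k} = O(1)$ in the derivation of (\ref{exmark})) and that $q > \frac{1}{2k-1}$ is used (so that $\frac{\omega(e)}{q} < (2k-1)\omega(e)$ in the same-interval case of Lemma \ref{lm:stretch}, and so that $k^2 \cdot q = \Omega(k)$ for absorbing the $\omega(T)$ term in the weight bound). Both hypotheses appear explicitly in the statement, so no further adjustment is needed.

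The proof is essentially routine bookkeeping; there is no real obstacle. The only small subtlety is to note that $\rho = 2$ is indeed the best choice within the permitted range $1 < \rho \le 2$: since all $\rho$-dependent terms are monotone decreasing in $\rho$ on this interval, pushing $\rho$ to the boundary value $2$ is optimal, and the theorem follows.
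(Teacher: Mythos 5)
Your proposal is correct and is essentially identical to the paper's own derivation: Theorem \ref{th:weighted} is obtained there precisely by combining Lemmas \ref{lm:stretch}, \ref{edgebound}, \ref{wtwt1}, and \ref{lm:time} and substituting $\rho = 2$. The only nuance is that edges of $E_0$ are handled by the $(2k-1)$-spanner $H'_0$ directly (a one-line remark the paper makes just before Lemma \ref{lm:stretch}) rather than by Lemma \ref{lm:stretch} itself, which covers only $e \in E_j$ for $j \ge 1$.
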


By substituting $q = \Theta(\frac{1}{\eps})$ in Theorem \ref{th:weighted},
for some small constant $\eps > 0$, we obtain:
\begin{corollary} \label{cor:try1}
Let $G = (V,E)$ be a  weighted graph, with $n = |V|, m = |E|$.
For any integer $k \ge 1$ and any constant $\eps > 0$, a $(\left(2k - 1)\cdot(1 +\eps)\right)$-spanner with 
$O\left(k \cdot n^{1+1/k} \right)$
   edges and lightness 
 $O\left(k^2 \cdot n^{1/k} \right)$,
 can be built in $O(k \cdot m +  \min\{n\cdot \log n, m \cdot \alpha(n)\})$ time. 
\end{corollary}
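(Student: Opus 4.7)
The plan is to deduce the corollary from Theorem~\ref{th:weighted} by a direct substitution of $q$ as a function of $\eps$, followed by a routine verification of the three output parameters and the admissibility condition $\frac{1}{2k-1} < q < k$.

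First I would pick $q = \lceil 4/\eps\rceil$. To verify the stretch bound, note that Theorem~\ref{th:weighted} produces stretch
\[
(2k-1)\cdot\left(1+\tfrac{2}{q}\right) + \tfrac{2}{q} \;=\; (2k-1) + \tfrac{4k}{q}.
\]
Since $\tfrac{4k}{2k-1}\le 4$ for all $k\ge 1$, the choice $q\ge 4/\eps$ yields $\tfrac{4k}{q}\le (2k-1)\eps$, so the overall stretch is at most $(2k-1)(1+\eps)$ as claimed. Next, because $q=\Theta(1/\eps) = \Theta(1)$ is a constant (as $\eps$ is a constant), the edge count $O(k\cdot n^{1+1/k}(1+q))$ simplifies to $O(k\cdot n^{1+1/k})$ and the lightness $O(k^2\cdot n^{1/k}\cdot q)$ simplifies to $O(k^2\cdot n^{1/k})$. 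The running time is the one inherited verbatim from Theorem~\ref{th:weighted}.

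The one subtle point, and the step I expect to be the main obstacle, is the admissibility window $\frac{1}{2k-1} < q < k$. The lower bound is automatic since $q\ge 4/\eps\ge 1 > \frac{1}{2k-1}$ for small enough $\eps$. The upper bound $q<k$ requires $k > 4/\eps$. I would handle the remaining regime $k\le 4/\eps$ (which is a constant-$k$ regime, since $\eps$ is a fixed constant) by a separate argument: for such $k$ one can either invoke Theorem~\ref{th:weighted} with $q=k-\tfrac{1}{2}$, where both the stretch blow-up $\tfrac{4k}{q}$ and the multiplicative factors $q$ and $1+q$ are bounded by constants depending only on $\eps$, and hence can be absorbed into the $O(\cdot)$ notation; or, equivalently, apply a known $(2k-1)$-spanner construction together with the MST, whose lightness bound for constant $k$ is also a constant factor of the target $O(k^2\cdot n^{1/k})$.

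Combining the substitution for $k>4/\eps$ with the constant-$k$ fix, we obtain all three bounds of the corollary simultaneously, with the asserted running time. No further estimates are needed, since every parameter of Theorem~\ref{th:weighted} has already been analyzed; only the arithmetic of plugging in $q=\Theta(1/\eps)$ remains.
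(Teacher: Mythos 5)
Your main step is exactly the paper's derivation: the corollary is obtained by the one-line substitution $q=\Theta(1/\eps)$ into Theorem~\ref{th:weighted}, and your arithmetic for the stretch ($(2k-1)+4k/q\le(2k-1)(1+\eps)$ once $q\ge 4/\eps$, using $4k/(2k-1)\le 4$), the edge count, the lightness, and the running time is correct. You are also right that the statement of Theorem~\ref{th:weighted} formally requires $\frac{1}{2k-1}<q<k$, a condition the paper glosses over when $k$ is small relative to $1/\eps$.

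However, both of your patches for the regime $k\le 4/\eps$ are flawed. Taking $q=k-\tfrac12$ does not work: first, for $k=1$ the admissible window $(\tfrac{1}{2k-1},k)=(1,1)$ is empty; second, and more importantly, the stretch guarantee $(2k-1)(1+\eps)$ is an exact multiplicative bound with no hidden constant, so the extra additive term $4k/q$ cannot be ``absorbed into the $O(\cdot)$ notation'' --- e.g.\ for $k=2$, $\eps=0.01$ your choice gives stretch $3+\tfrac{8}{1.5}>8$, far above the target $3.03$. Your alternative --- a known $(2k-1)$-spanner plus the MST --- is also not valid, since the black-box constructions (Theorems~\ref{weighted}, \ref{weighted2}) come with no bound on the weight of the selected edges; adding the MST does not cap the weight of the other $O(k\cdot n^{1+1/k})$ spanner edges, and unbounded lightness is precisely the problem the paper is addressing. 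The clean way to close this edge case is to observe where the hypothesis $q<k$ is actually used in the proof of Theorem~\ref{th:weighted}: it enters only through the estimate $q^{1/k}=O(1)$ in Lemma~\ref{edgebound} (and hence in Lemma~\ref{wtwt1}), which holds automatically whenever $q=\Theta(1/\eps)$ is a constant; the lower bound $q>\frac{1}{2k-1}$ is satisfied since $q\ge 4/\eps\ge 4$. So the construction and its analysis go through verbatim for all $k\ge 1$ with $q=\Theta(1/\eps)$, even though the theorem as stated does not literally cover that range.
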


Note also that by substituting $q = \Theta(k/\eps)$, we get a $(2k-1+\eps)$-spanner with 
$O\left(k^2 \cdot n^{1+1/k} \right)$
   edges and lightness 
 $O\left(k^3 \cdot n^{1/k} \right)$, within the same time.

\section{Variants of the Construction}
In this section we devise some variants of the basic construction of Section \ref{first}
by applying a small (but significant) modification to Algorithm $LightSp$.

Notice that Algorithm $LightSp$ employs Algorithm $WtdSp$
as a black box for building (i) the spanner $H'_0 = (V_0,E'_0)$ for the graph $G_0 = (V,E_0)$,
and (ii) the spanner $H'_j = (V_j,E'_j)$ for the graph $\tilde G_j = (V_j,\tilde E_j)$ in step
3 of the main loop, for each $1 \le j \le \ell$. Instead of employing Algorithm $WtdSp$,
we can employ some of the other black-box spanners that are summarized in Section 1.6.

\subsection{First Variant (Slightly Increasing the Stretch)} \label{subfirst}
There is a significant difference between the spanner $H'_0$, and the spanners $H'_j$ for $j \ge 1$. While the aspect ratio (defined as the ratio between the maximum
and minimum edge weights) of the graph $G_0$ may be unbounded,
the aspect ratio of all the graphs $\tilde G_j, 1 \le j \le \ell$, is bounded above by $\rho$. This suggests
that we may view the graphs $\tilde G_j$ as unweighted, and will thus be able to use Algorithm $UnwtdSp$ for building the spanners $H'_j$ for them.
As a result, the stretch of each spanner $H'_j$ will increase by a factor of $\rho$. 

Denote by $\cH^*$ the variant of $H^*$ obtained by employing Algorithm $UnwtdSp$ for building the spanners $H'_j$, $1 \le j \le \ell$.
(We still use Algorithm $WtdSp$ to build the spanner $H'_0$.) 

We will next  analyze the properties of the resulting construction $\cH^*$.  
This analysis is  very similar to the analysis of the basic
construction that is given in Section 2.2, hence we aim for conciseness.
\vspace{0.1in}
\\{\bf Stretch.~}
We first show how to adapt the stretch analysis of Section 2.2.1 to the new construction $\cH^*$.
\\The proof of Lemma \ref{lm:stretch} carries through except for Equation (\ref{stretchspan}) that needs to be changed.
Observe that all weights of edges in $\tilde E_j$ belong to $W_j = [\xi_j,\rho \cdot \xi_j]$.
Hence they differ from each other by at most a multiplicative factor of $\rho$.
Note that $H'_j$ is an \emph{unweighted $(2k-1)$-spanner} for $\tilde G_j$, i.e.,  
a subgraph of $\tilde G_j$ that contains, for every edge $e \in \tilde E_j$,
a path with at most $(2k-1)$ edges between its endpoints.   
It is easy to see that $H'_j$ provides a $(\rho \cdot (2k-1))$-spanner for $\tilde G_j$. 
Hence, instead of Equation (\ref{stretchspan}), we will now have
$$\omega(\Pi'(r_j(u),r_j(v))) ~\le~ \rho \cdot (2k-1)\cdot dist_{\tilde G_j}(r_j(u),r_j(v)) ~\le~ \rho \cdot (2k-1) \cdot \omega(\tilde e) ~\le~ \rho \cdot (2k-1) \cdot \omega(e).$$
As a result, the upper bound on the weight of the path $\Pi^*(r_j(u),r_j(v))$ will also increase by a factor of $\rho$,
and we will have $\omega(\Pi^*(r_j(u),r_j(v))) \le \rho \cdot (2k-1) \cdot \omega(e) \cdot \left(1+\frac{2}{q}\right)$.
Consequently, we will get that $\omega(\Pi^*(u,v)) \le \omega(e)\cdot \left(\rho \cdot (2k-1) \cdot \left(1+ \frac{2}{q}\right) + \frac{2}{q} \right)$.
Hence the stretch of $\cH^*$ is $\rho \cdot (2k-1) \cdot (1+\frac{2}{q}) + \frac{2}{q}$. 
\vspace{0.1in}
\\{\bf Number of Edges.~}
Next, we show how to adapt the size analysis for the new construction $\cH^*$.
\\The only change from the proof of Lemma \ref{edgebound} is that now the upper bound
on $|E^*_j| = |E'_j|$, for $1 \le j \le \ell$, is smaller by a factor of $k$. The reason is
that we now use Algorithm $UnwtdSp$ rather than Algorithm $WtdSp$ to build the spanner $H'_j = (V_j, E'_j)$,  for $1 \le j \le \ell$;
compare Theorem \ref{unweighted} with Theorem \ref{weighted}.
Note that the bound on $|E'_0|$ remains unchanged, though, as we still use
Algorithm $WtdSp$  to build the spanner $H'_0$.
We will get
 \begin{eqnarray*}
|E^*| &=& |E_T| + |E'_0| + \sum_{j=1}^\ell |E^*_j| ~=~ n - 1 + O(k \cdot n^{1+{1}/{k}})+ \sum_{1 \le j < \log_\rho q +1} |E^*_j| + \sum_{\log_\rho q +1 \le j \le \ell} |E^*_j|
\\ &=& O(k \cdot n^{1+{1}/{k}})  + O(\log_\rho q  \cdot n^{1+{1}/{k}}) + O\left(q \cdot n^{1+{1}/{k}}\right) \cdot \sum_{\log_\rho q +1 \le j \le \ell}    \left(\frac{1}{\rho^{1+{1}/{k}}}\right)^{j-1} 
\\ &=& O\left(n^{1+{1}/{k}} \cdot \left(k +    \frac{q}{\rho-1}\right)\right). 
\end{eqnarray*}
{\bf Weight.~}
We now show how to adapt the weight analysis for the new construction $\cH^*$.
\\The only  change from the proof of Lemma \ref{wtwt1} is that now the upper bound
on $\omega(E^*_j)$, $1 \le j \le \ell$, is smaller by a factor of $k$.
The bound on $\omega(E'_0)$ remains unchanged (see Equation (\ref{realfirst})). 
Hence,
\begin{eqnarray*}
\omega(E^*) &=& \omega(T) + \omega(E'_0) + \sum_{1 \le j < \log_\rho q + 1}\omega(E^*_j) + \sum_{\log_\rho q + 1 \le j \le \ell} \omega(E^*_j)
\\ &=& \omega(T) + O(k \cdot n^{1/k} \cdot L) + O\left(n^{1/k} \cdot L \cdot q\right) + O\left(k \cdot q \cdot n^{1/k} \cdot L  \cdot \frac{1}{\rho - 1}\right) \\ &=& 
O\left(k \cdot n^{1/k}  \cdot \left(1 + \frac{q}{\rho - 1}\right) \right) \cdot \omega(MST(G)). 
\end{eqnarray*}
{\bf Running time.~} Clearly, the running time of $\cH^*$  is not higher than that of the basic construction.

We summarize the properties of the resulting construction $\cH^*$  in the following theorem.
\begin{theorem} \label{th:unweighted}
Let $G = (V,E)$ be a  weighted graph, with $n = |V|, m = |E|$.
For any integer $k \ge 1$, and any numbers $\frac{1}{2k-1} < q < k$ and $1 < \rho \le 2$, 
a $\left(\rho \cdot (2k-1) \cdot (1+\frac{2}{q}) + \frac{2}{q}\right)$-spanner with 
$O\left(n^{1+1/k} \cdot \left(k +   \frac{q}{\rho-1}\right)\right)$ 
   edges and lightness 
 $O\left(k \cdot n^{1/k}  \cdot \left(1 +  \frac{q}{\rho - 1}\right) \right)$,
 can be built in $O(k \cdot m +  \min\{n\cdot \log n, m \cdot \alpha(n)\})$ time. 
\end{theorem}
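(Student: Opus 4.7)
The plan is to retrace the four-part analysis of Section~2.2 (stretch, number of edges, weight, running time) while swapping in Algorithm $UnwtdSp$ in place of Algorithm $WtdSp$ for building the spanners $H'_j$, $1 \le j \le \ell$, but keeping $WtdSp$ for $H'_0$. The whole construction pipeline (hierarchy of intervals, dummy vertices, representative/source edges, translation to $E^*_j$) is unchanged, so I only need to track where the substitution modifies the bounds.

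First I would redo the stretch bound of Lemma~\ref{lm:stretch}. The only place the analysis uses the quality of $H'_j$ is Equation~(\ref{stretchspan}). Since every edge of $\tilde G_j$ has weight in $W_j = (\xi_j, \rho\xi_j]$, any subgraph $H'_j$ that is an unweighted $(2k-1)$-spanner of $\tilde G_j$ is automatically a weighted $\rho(2k-1)$-spanner. Thus $\omega(\Pi'(r_j(u),r_j(v))) \le \rho(2k-1)\cdot\omega(e)$, and the factor $\rho$ propagates exactly once through the remainder of the proof, yielding the stretch $\rho(2k-1)(1+\tfrac{2}{q}) + \tfrac{2}{q}$. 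The MST-detours between $u'_i$ and $u_i$ (which contribute the $\tfrac{2}{q}$ factors) are unaffected.

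Next I would redo the edge count of Lemma~\ref{edgebound}. For $1 \le j \le \ell$, Theorem~\ref{unweighted} gives $|E^*_j|=|E'_j|=O(|V_j|^{1+1/k})$, losing the $k$ factor. The bound on $|E'_0|=O(k\cdot n^{1+1/k})$ is unchanged because we still apply $WtdSp$ to $G_0$. Partitioning indices at $j = \log_\rho q + 1$ as before and summing the geometric series $\sum (1/\rho^{1+1/k})^{j-1}=O(1/(\rho-1))$ produces a leading term $O(n^{1+1/k})\cdot(k + \log_\rho q + q/(\rho-1)) = O(n^{1+1/k}(k + q/(\rho-1)))$. For the weight bound of Lemma~\ref{wtwt1} the same $k$-factor disappears from each $\omega(E^*_j)$ for $j\ge 1$, while $\omega(E'_0)=O(k\cdot n^{1/k}\cdot L)$ is unchanged; summing the two geometric tails (one giving $O(q)$, the other giving $O(k\cdot q/(\rho-1))$) against $L = O(\omega(MST(G)))$ yields lightness $O(k\cdot n^{1/k}(1 + q/(\rho-1)))$. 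Here the $\omega(E'_0)$ term alone contributes the additive $O(k\cdot n^{1/k})$ inside the parenthesis, which is why the ``$1+$'' appears.

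Finally, the running time analysis of Lemma~\ref{lm:time} carries over verbatim: building $T$ and $\cL$, bucketing edges into $E_0,\dots,E_\ell$, and constructing the multigraphs $\hat G_j$ and simple graphs $\tilde G_j$ all take $O(k\cdot m + \min\{n\log n, m\cdot\alpha(n)\})$ time; the new ingredient, running $UnwtdSp$ on each $\tilde G_j$, costs $O(|\tilde E_j|)$ per iteration by Theorem~\ref{unweighted}, summing to $O(m)$, which is dominated by the $O(k\cdot m)$ cost of the single $WtdSp$ invocation on $G_0$. I expect the only subtlety to be the stretch step: one must verify that after the $\rho$-blowup inside $\Pi'(r_j(u),r_j(v))$ there is no second place in the path composition $\Pi_T(u,r_j(u))\circ\Pi^*(r_j(u),r_j(v))\circ\Pi_T(r_j(v),v)$ where the aspect-ratio factor is absorbed a second time; the rest is bookkeeping that mirrors Section~2.2 line-by-line.
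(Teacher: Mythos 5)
Your proposal is correct and follows essentially the same route as the paper: it re-runs the four-part analysis of Section~2.2, changing only Equation~(\ref{stretchspan}) via the observation that an unweighted $(2k-1)$-spanner of $\tilde G_j$ (whose edge weights all lie within a factor $\rho$ of each other) is a weighted $\rho\cdot(2k-1)$-spanner, and dropping the factor $k$ from the bounds on $|E^*_j|$ and $\omega(E^*_j)$ for $j\ge 1$ while keeping the $WtdSp$ bounds for $H'_0$. The bookkeeping (geometric sums, $\log_\rho q = O(q/(\rho-1))$, and the running-time domination by the single $WtdSp$ call on $G_0$) matches the paper's own argument.
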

By substituting $q = \Theta(\frac{1}{\eps}), \rho = 1 + \Theta(\eps)$ in Theorem \ref{th:unweighted}, for an arbitrary constant $\eps > 0$, 
we obtain:
\begin{corollary} \label{cor:ltspspanner}
Let $G = (V,E)$ be a  weighted graph, with $n = |V|, m = |E|$.
For any integer $k \ge 1$ and any  small constant $\eps > 0$, a 
$\left((2k - 1) \cdot ( 1+\eps)\right)$-spanner 
with 
$O\left(k \cdot n^{1+1/k} \right)$
   edges and lightness 
 $O\left(k \cdot n^{1/k} \right)$,
 can be built in $O(k \cdot m +  \min\{n\cdot \log n, m \cdot \alpha(n)\})$ time. 
\end{corollary}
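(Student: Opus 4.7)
The plan is to derive Corollary~\ref{cor:ltspspanner} as a direct specialization of Theorem~\ref{th:unweighted} with $q = \Theta(1/\eps)$ and $\rho = 1 + \Theta(\eps)$, treating $\eps$ as a fixed positive constant. For every $k$ large enough that $q < k$ (and this is the interesting regime; the finitely many small-$k$ cases are absorbed into the constants hidden by the $O$-notations, or handled by simply returning the whole graph when $k$ is below a constant threshold), both parameters lie in the admissible ranges $\frac{1}{2k-1} < q < k$ and $1 < \rho \le 2$ required by the theorem.

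First I would verify the stretch. With the substitution, the bound $\rho(2k-1)(1+2/q)+2/q$ becomes $(1+\Theta(\eps))(2k-1)(1+\Theta(\eps))+\Theta(\eps) = (2k-1)(1+O(\eps))+O(\eps)$. Absorbing the additive $O(\eps)$ term into the multiplicative one (using $2k-1 \ge 1$) yields a stretch of the form $(2k-1)(1+C\eps)$ for some absolute constant $C$; rescaling the user's target $\eps$ down by the factor $C$ at the outset then gives exactly $(2k-1)(1+\eps)$.

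Next I would verify the size and lightness. The key observation is that under the substitution, $q/(\rho-1) = \Theta(1/\eps)/\Theta(\eps) = \Theta(1/\eps^2)$, which is a constant depending only on $\eps$. Consequently, the edge count $O\bigl(n^{1+1/k}(k + q/(\rho-1))\bigr)$ simplifies to $O(k \cdot n^{1+1/k})$ (using $k \ge 1$ to fold the $\Theta(1/\eps^2)$ additive term into $k$), and the lightness $O\bigl(k \cdot n^{1/k}(1 + q/(\rho-1))\bigr)$ simplifies to $O(k \cdot n^{1/k})$.

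Finally, the running time stated in Theorem~\ref{th:unweighted} is independent of $q$ and $\rho$, so it transfers verbatim as $O(k \cdot m + \min\{n \log n, m \cdot \alpha(n)\})$. The whole argument is essentially constant-bookkeeping, and there is no substantive combinatorial obstacle; the only mild subtlety worth stating carefully is the calibration of the hidden constants in $\Theta(1/\eps)$ and $\Theta(\eps)$ against the target $\eps$ that appears in the multiplicative stretch slack.
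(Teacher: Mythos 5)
Your proposal matches the paper's own derivation exactly: the corollary is obtained there by substituting $q = \Theta(1/\eps)$ and $\rho = 1+\Theta(\eps)$ into Theorem~\ref{th:unweighted}, and your constant bookkeeping (absorbing the additive $2/q$ term via $2k-1\ge 1$ and noting that $q/(\rho-1)=\Theta(1/\eps^2)$ is a constant) is precisely the intended calculation. The one wrinkle you flag---that the hypothesis $q<k$ fails when $k=O(1/\eps)$---is glossed over by the paper as well; note, though, that your fallback of returning the whole graph does not meet the size and lightness bounds in that regime, whereas the cleaner observation is that the constraint $q<k$ is only used in the analysis to ensure $q^{1/k}=O(1)$, which holds automatically once $q=\Theta(1/\eps)$ is a constant.
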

Corollary \ref{cor:ltspspanner}
in the particular case $k = O(\log n)$ 
gives rise to  an $O(\log n)$-spanner with $O(n \cdot \log n)$ edges and lightness $O(\log n)$.
The running time of this construction is $O(m \cdot \log n)$.

\subsection{Second Variant (Increasing the Running Time)}
Denote by $\check \cH$ the variant of $H^*$ obtained by employing Algorithm $UnwtdSp$ for building the spanners $H'_j$, $1 \le j \le \ell$,
and employing Algorithm $WtdSp_2$ (due to Roditty and Zwick \cite{RZ04}) to build the spanner $H'_0$.
It is easy to see that the stretch bound of the resulting construction $\check \cH$ is equal to that of $\cH^*$.
Also, the size and weight bounds of $\check \cH$ are better than those of $H^*$ and $\cH^*$, but the running time (which is dominated
by the running time of Algorithm $WtdSp_2$) is higher. 
The analysis of this variant is very similar to the analysis of the basic construction (in Section 2.2) and the analysis of the first variant (in Section \ref{subfirst}), and is thus omitted.

We summarize the properties of the resulting construction $\check \cH$ in the following theorem.
\begin{theorem} \label{th:unweightedver2}
Let $G = (V,E)$ be a weighted graph, with $n = |V|$.
For any integer $k \ge 1$, and any numbers $\frac{1}{2k-1} < q < k$ and $1 < \rho \le 2$, 
a $\left(\rho \cdot (2k-1) \cdot (1+\frac{2}{q}) + \frac{2}{q}\right)$-spanner with 
$O\left(n^{1+1/k} \cdot \left(1 +    \frac{q}{\rho-1}\right)\right)$ 
   edges and lightness
 $O\left(k \cdot n^{1/k}   \cdot \frac{q}{\rho - 1}\right)$,
 can be built in $O(k \cdot n^{2+1/k})$ time.
\end{theorem}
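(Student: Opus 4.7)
The plan is to adapt the analyses of Sections 2.2 and \ref{subfirst} in parallel, tracking exactly what changes when we swap in Algorithm $WtdSp_2$ for $H'_0$ while keeping Algorithm $UnwtdSp$ for the spanners $H'_1,\ldots,H'_\ell$. Since $\check\cH$ differs from $\cH^*$ only in the choice of black-box used on $G_0=(V,E_0)$, and since $WtdSp_2$ (Theorem \ref{weighted2}) still delivers a $(2k-1)$-spanner, the stretch analysis is essentially unchanged. Specifically, for any edge $e\in E_0$, the path in $H'_0\subseteq\check\cH$ already has weight at most $(2k-1)\cdot\omega(e)$, well within the target stretch. For $e\in E_j$ with $j\ge 1$, the proof of Lemma \ref{lm:stretch} as modified in Section \ref{subfirst} applies verbatim (the graphs $\tilde G_j$ still have aspect ratio $\le\rho$, so an unweighted $(2k-1)$-spanner is a weighted $\rho(2k-1)$-spanner), yielding stretch $\rho\cdot(2k-1)\cdot(1+2/q)+2/q$.

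For the edge count, the only modifications to Lemma \ref{edgebound} (and its first-variant counterpart) are the sharper bounds on $|E'_0|$ and on the $|E'_j|$. By Theorem \ref{weighted2} we have $|E'_0|=O(n^{1+1/k})$ rather than $O(k\cdot n^{1+1/k})$, and by Theorem \ref{unweighted} we have $|E^*_j|=|E'_j|=O(|V_j|^{1+1/k})$ rather than $O(k\cdot|V_j|^{1+1/k})$. Plugging these into the same two-regime summation as before (using $|V_j|\le\min\{n,n_j\}$ and the geometric series $\sum_{j\ge\log_\rho q+1}(1/\rho^{1+1/k})^{j-1}=O(1/(\rho-1))$) gives the edge bound
\[
|E_T|+|E'_0|+\sum_{j=1}^\ell|E^*_j|=O\!\left(n^{1+1/k}\cdot\left(1+\tfrac{q}{\rho-1}\right)\right),
\]
matching the statement. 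The lightness estimate is obtained identically, by multiplying each $|E^*_j|$ by the maximum weight $\rho^j\cdot L/n$ of an $E_j$-edge and summing the two regimes; the $k$-factor that dominated Lemma \ref{wtwt1} disappears in both the low-$j$ and the high-$j$ sums, so the dominant term from the high-$j$ tail, $O(k\cdot q\cdot n^{1/k}\cdot L/(\rho-1))$, absorbs the other contributions and yields lightness $O(k\cdot n^{1/k}\cdot q/(\rho-1))$ after using $L=O(\omega(MST(G)))$.

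Finally, for the running time, every step of Algorithm $LightSp$ except the two black-box calls costs $O(k\cdot m+\min\{n\log n,m\cdot\alpha(n)\})$ by the proof of Lemma \ref{lm:time}; the $\ell$ invocations of $UnwtdSp$ on the $\tilde G_j$ together cost $O(\sum_j|\tilde E_j|)=O(m)$ by Theorem \ref{unweighted}; and the single invocation of $WtdSp_2$ on $G_0$ costs $O(k\cdot |V_0|^{2+1/k})=O(k\cdot n^{2+1/k})$ by Theorem \ref{weighted2}. Since $m\le n^2$ and $k\cdot n^{2+1/k}$ dominates $k\cdot m$, the total running time is $O(k\cdot n^{2+1/k})$.

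The main obstacle is really bookkeeping rather than a new idea: one must be careful that the stretch proof of Lemma \ref{lm:stretch}, which treats the $E_0$-edges separately from the $E_j$-edges, still goes through when $H'_0$ is produced by $WtdSp_2$ instead of $WtdSp$. This is immediate because the stretch guarantee of both algorithms is the same $(2k-1)$; the substitution only affects the size and running-time accounting. With that observation in hand, the three-quantity analysis above can be assembled directly from the corresponding displays in Sections 2.2 and \ref{subfirst}, which is why the authors elect to omit it.
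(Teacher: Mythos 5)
Your proposal is correct and follows exactly the route the paper intends: the paper omits the proof of Theorem \ref{th:unweightedver2}, stating that it is the same adaptation of the analyses in Section 2.2 and Section \ref{subfirst}, and your argument carries out precisely that adaptation (stretch unchanged since $WtdSp_2$ still gives a $(2k-1)$-spanner for $G_0$, the $k$-factor in $|E'_0|$ and $\omega(E'_0)$ dropping by Theorem \ref{weighted2}, and the running time dominated by the single $O(k\cdot n^{2+1/k})$ call to $WtdSp_2$).
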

By substituting $q = \Theta(\frac{1}{\eps}),\rho = 1 + \Theta(\eps)$ in Theorem \ref{th:unweightedver2}, for an arbitrary constant $\eps > 0$, 
we obtain:
\begin{corollary} \label{highertime}
Let $G = (V,E)$ be a weighted graph, with $n = |V|$.
For any integer $k \ge 1$
and any small constant $\eps > 0$, a 
$\left((2k - 1) \cdot ( 1+\eps)\right)$-spanner 
with 
$O\left(n^{1+1/k} \right)$
   edges and lightness 
 $O\left(k \cdot n^{1/k} \right)$,
 can be built in $O(k \cdot n^{2+1/k})$ time. 
\end{corollary}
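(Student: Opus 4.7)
The plan is to derive Corollary \ref{highertime} as a direct specialization of Theorem \ref{th:unweightedver2}, which has already done the heavy lifting (stretch, edge count, weight, and running-time bounds for the $\check\cH$ construction as functions of the free parameters $q$ and $\rho$). So the only task is to pick $q$ and $\rho$ as specified and verify that each of the four bounds simplifies to the form claimed in the corollary. In particular, since the running time in Theorem \ref{th:unweightedver2} is independent of $q$ and $\rho$, the $O(k \cdot n^{2+1/k})$ bound comes for free.

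First, I would fix constants $c_1, c_2 > 0$ and set $q = c_1/\eps$ and $\rho = 1 + c_2 \eps$, noting that for $\eps$ sufficiently small we have $\rho \le 2$ and $1/(2k-1) < q < k$ as required by the hypothesis of the theorem (the latter holding provided $\eps$ is small relative to $1/k$; otherwise one uses a smaller effective $\eps$, which only strengthens the result). Then I would substitute into the stretch expression:
\[
\rho \cdot (2k-1) \cdot \left(1 + \tfrac{2}{q}\right) + \tfrac{2}{q}
\;=\; (1 + c_2 \eps)(2k-1)\left(1 + \tfrac{2\eps}{c_1}\right) + \tfrac{2\eps}{c_1}.
\]
Expanding, the leading term is $2k-1$ and the remaining contributions are of the form $(2k-1) \cdot O(\eps) + O(\eps)$, which is at most $(2k-1) \cdot O(\eps)$. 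By choosing $c_1, c_2$ so that the combined constant works out, the stretch is bounded by $(2k-1)(1+\eps)$ after replacing $\eps$ by a suitable rescaling of itself (absorbing the constants).

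Next, I would handle the edge and weight bounds. Note that $q/(\rho - 1) = (c_1/\eps)/(c_2 \eps) = c_1/(c_2 \eps^2) = O(1)$ for fixed $\eps$. Hence the edge bound $O(n^{1+1/k} \cdot (1 + q/(\rho-1)))$ collapses to $O(n^{1+1/k})$, and the lightness bound $O(k \cdot n^{1/k} \cdot q/(\rho-1))$ collapses to $O(k \cdot n^{1/k})$, with the hidden constants depending on $\eps$ (as is standard for such $\eps$-tradeoffs). This yields exactly the parameters claimed in Corollary \ref{highertime}.

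There is no real obstacle here; the only thing to be mildly careful about is keeping the stretch analysis clean, since both $\rho$ and $1/q$ contribute first-order corrections and one must pick the constants in the right order so that the final stretch bound can be written as $(2k-1)(1+\eps)$ for the nominal $\eps$, rather than $(2k-1)(1+C\eps)$ for some larger constant $C$. This is a purely cosmetic issue and is resolved by the standard trick of applying the theorem with $\eps' := \eps/C$ in place of $\eps$.
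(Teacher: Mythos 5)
Your proposal is correct and takes essentially the same route as the paper: the paper's entire ``proof'' of Corollary \ref{highertime} is the one-line substitution $q = \Theta(1/\eps)$, $\rho = 1+\Theta(\eps)$ into Theorem \ref{th:unweightedver2}, and your verification of the stretch, size, lightness, and (parameter-independent) running-time bounds, with the standard rescaling $\eps' = \eps/C$, fills in exactly what the paper leaves implicit. (One small quibble in your parenthetical: the hypothesis $q < k$ with $q = c_1/\eps$ actually requires $k = \Omega(1/\eps)$, and passing to a \emph{smaller} $\eps$ makes this constraint harder, not easier, to satisfy; but the paper silently glosses over this edge case for small $k$ as well, so this does not distinguish your argument from the paper's.)
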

Note that the tradeoff between the stretch and lightness exhibited in Corollary \ref{highertime} is the same as in \cite{CDNS92}.
(There the stretch is $2k-1$, and the lightness is $O(k \cdot n^{(1+\eps)/k})$.) The number of edges is
$O(n^{1+1/k})$ in both results, but our running time is $O(k \cdot n^{2+1/k})$, instead of $O(m \cdot (n^{1+{1}/{k}} +  n \cdot \log n))$ in \cite{CDNS92}.

By substituting  $q = \Theta(1/k), \rho = 2$ in Theorem \ref{th:unweightedver2},
we obtain:
\begin{corollary}  \label{hope}
Let $G = (V,E)$ be a  weighted graph, with $n = |V|$.
For any integer $k \ge 1$, an $O(k^2)$-spanner with 
$O\left(n^{1+1/k} \right)$
   edges and lightness 
 $O\left(n^{1/k} \right)$,
 can be built in $O(k \cdot n^{2+1/k})$ time. 
\end{corollary}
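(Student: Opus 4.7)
The plan is to obtain Corollary \ref{hope} as a direct instantiation of Theorem \ref{th:unweightedver2}, by choosing the free parameters $q$ and $\rho$ so as to drive down the exponent in the lightness and the leading constant in the edge count, at the (intentional) cost of enlarging the stretch from linear to quadratic in $k$. Concretely, I would set $\rho = 2$ and $q = \Theta(1/k)$; the constant hidden in the $\Theta$ must be chosen so that the admissibility constraint $\frac{1}{2k-1} < q < k$ from Theorem \ref{th:unweightedver2} holds (for instance $q = 1/k$ works for $k \ge 2$, and the case $k=1$ is trivial since $G$ itself is a $1$-spanner). Once $q$ and $\rho$ are fixed, the conclusion is purely a matter of plugging in.

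The bulk of the argument, then, is a short four-line calculation, one line per parameter in the statement of Theorem \ref{th:unweightedver2}. For the stretch, $\rho (2k-1)(1 + 2/q) + 2/q$ becomes $2(2k-1)(1 + \Theta(k)) + \Theta(k) = O(k^2)$. For the number of edges, $n^{1+1/k} \cdot (1 + q/(\rho-1))$ becomes $n^{1+1/k} \cdot (1 + \Theta(1/k)) = O(n^{1+1/k})$, since the factor $1 + \Theta(1/k)$ is bounded by a constant. For the lightness, $k \cdot n^{1/k} \cdot q/(\rho-1)$ becomes $k \cdot n^{1/k} \cdot \Theta(1/k) = O(n^{1/k})$, which is where the factor of $k$ appearing in the lightness of Corollary \ref{highertime} gets absorbed: paying more in stretch lets us choose $q$ small enough to kill it. The running time $O(k \cdot n^{2+1/k})$ is inherited unchanged from Theorem \ref{th:unweightedver2}, since it does not depend on $q$ or $\rho$.

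There is really no hard part here; the only thing to double-check is the admissibility range of $q$ and the constant factors (in particular, that shrinking $q$ does not violate the lower bound $q > 1/(2k-1)$ used in the proof of Lemma \ref{lm:stretch} to handle intra-interval edges). This is why the specific constant in $q = \Theta(1/k)$ matters, and why I would spell out a concrete choice such as $q = 1/k$ (for $k \ge 2$) rather than leaving it implicit. With that verified, the three displayed parameters and the running time follow immediately from the four substitutions above, completing the proof.
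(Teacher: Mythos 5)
Your proposal is correct and is essentially the paper's own derivation: Corollary \ref{hope} is obtained there in exactly the same one-line way, by substituting $q = \Theta(1/k)$ and $\rho = 2$ into Theorem \ref{th:unweightedver2} and reading off the stretch, size, lightness and running time just as you do. The one caveat is your patch for $k=1$: taking $G$ itself gives the stretch and edge bounds but not the claimed lightness $O(n^{1/k}) = O(n)$, since $\omega(G)$ can be arbitrarily large relative to $\omega(MST(G))$ (e.g.\ many heavy edges on top of a light spanning path) --- though the paper itself silently glosses over $k=1$, where the admissible range $\frac{1}{2k-1} < q < k$ is empty.
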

Corollary \ref{hope} in the particular case $k = O(\log n)$ gives rise to an $O(\log^2 n)$-spanner
with $O(n)$ edges and lightness $O(1)$. We can extend this result
to get a general tradeoff between the three parameters
by substituting $k = O(\log n), \rho = 2$ in Theorem \ref{th:unweightedver2}, and taking $1 \le \ell = \frac{1}{q} = O(\log n)$.
\begin{corollary} \label{co:unweightedver2}
Let $G = (V,E)$ be a weighted graph, with $n = |V|$.
For any number $1 \le \ell = O(\log n)$, 
an $O(\log n \cdot \ell)$-spanner with 
$O(n)$ 
   edges and lightness
 $O(\frac{\log n}{\ell})$,
 can be built in $O(n^{2} \cdot \log n)$ time.
\end{corollary}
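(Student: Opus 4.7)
My plan is to derive Corollary \ref{co:unweightedver2} as a direct consequence of Theorem \ref{th:unweightedver2} by making a suitable choice of the free parameters $k, q, \rho$. Specifically, I will set $\rho = 2$, choose $k = \lceil c \log_2 n \rceil$ for a sufficiently large constant $c$ so that $n^{1/k} = O(1)$ and $k = \Theta(\log n)$, and set $q = 1/\ell$. The first thing to check is the admissibility constraint $\frac{1}{2k-1} < q < k$ from Theorem \ref{th:unweightedver2}: since $\ell \le O(\log n)$, choosing the constant $c$ larger than the constant hidden in this $O(\log n)$ bound yields $\ell < 2k-1$, hence $q = 1/\ell > 1/(2k-1)$; the upper bound $q = 1/\ell \le 1 < k$ holds automatically as $\ell \ge 1$ and $k \ge 1$.

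Having fixed these values, I will simply substitute them into the three bounds provided by Theorem \ref{th:unweightedver2}. For the stretch, $\rho(2k-1)(1+2/q) + 2/q = 2(2k-1)(1+2\ell) + 2\ell = O(k \cdot \ell) = O(\log n \cdot \ell)$, as required. For the number of edges, $n^{1+1/k}\bigl(1 + \frac{q}{\rho-1}\bigr) = O(n) \cdot (1 + 1/\ell) = O(n)$, since $\ell \ge 1$ and $n^{1/k} = O(1)$. For the lightness, $k \cdot n^{1/k} \cdot \frac{q}{\rho - 1} = \Theta(\log n) \cdot O(1) \cdot \frac{1}{\ell} = O(\log n / \ell)$. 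Finally, the running time $O(k \cdot n^{2+1/k}) = O(\log n \cdot n^2) = O(n^2 \log n)$, which matches the claimed bound.

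There is no real obstacle here: the corollary is a straightforward parameter instantiation of the already-proved Theorem \ref{th:unweightedver2}. The only mild subtlety is ensuring that the choice of $k$ is large enough (as a function of the constant hiding in $\ell = O(\log n)$) for the admissibility condition on $q$ to be satisfied; this is handled by picking the constant $c$ in the definition of $k$ sufficiently large. Once this is observed, the proof reduces to the substitutions above.
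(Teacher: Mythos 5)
Your proposal is correct and matches the paper's own derivation: the corollary is obtained by substituting $\rho = 2$, $k = \Theta(\log n)$ (so $n^{1/k} = O(1)$), and $q = 1/\ell$ into Theorem \ref{th:unweightedver2}, exactly as you do. Your explicit check of the admissibility condition $\frac{1}{2k-1} < q < k$ (by taking the constant in $k$ large enough relative to the constant in $\ell = O(\log n)$) is a sensible detail the paper leaves implicit.
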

The tradeoff of Corollary \ref{co:unweightedver2} was also given by Chandra et al.\ \cite{CDNS92},
but their running time is $O(m \cdot  n \cdot \log n)$.

\subsection{Third Variant (Increasing the Running Time Some More)}
Denote by $\hat \cH$ the variant of $H^*$ obtained by employing Algorithm $WtdSp_2$ for building 
all the spanners $H'_j$, $1 \le j \le \ell$,
as well as the spanner $H'_0$.
It is easy to see that the stretch bound of the resulting construction $\hat \cH$ is equal to that of the basic construction $H^*$.
Also, the size and weight bounds of $\hat \cH$ are  exactly the same   as those of the second variant $\check \cH$,
but the running time is slightly higher (by a factor of $(1+q^2)$) than that of $\check \cH$. 
The analysis of this variant is very similar to the above,   
and is thus omitted.

We summarize the properties of the resulting construction $\hat \cH$ in the following theorem.
(We substituted $\rho=2$ to optimize the parameters of the construction.)     
\begin{theorem} \label{th:unweightedver3}
Let $G = (V,E)$ be a weighted graph, with $n = |V|$.
For any integer $k \ge 1$, and any number $\frac{1}{2k-1} < q < k$, 
a $\left((2k-1) \cdot (1+\frac{2}{q}) + \frac{2}{q}\right)$-spanner with 
$O\left(n^{1+1/k} \cdot \left(1 +  {q}\right)\right)$ 
   edges and lightness
 $O\left(k \cdot n^{1/k}   \cdot {q} \right)$,
 can be built in $O(k \cdot n^{2+1/k} \cdot (1+q^2))$ time.
\end{theorem}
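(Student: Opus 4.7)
The plan is to reuse the entire analysis of the basic construction from Section 2.2, changing only where Algorithm $WtdSp$ is replaced by Algorithm $WtdSp_2$. The key observation is that $WtdSp_2$ produces $(2k-1)$-spanners with the \emph{same} stretch guarantee as $WtdSp$, but with a better size bound ($O(n^{1+1/k})$ vs.\ $O(k \cdot n^{1+1/k})$) at the cost of a substantially higher running time ($O(k \cdot n^{2+1/k})$). Consequently, the stretch analysis of Lemma \ref{lm:stretch} applies verbatim to $\hat \cH$, giving stretch $(2k-1)\cdot(1+\frac{2}{q})+\frac{2}{q}$, exactly as in the basic construction with $\rho = 2$.

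For the size bound, I would revisit the proof of Lemma \ref{edgebound} and everywhere replace $|E'_j| = O(k \cdot |V_j|^{1+1/k})$ by $|E'_j| = O(|V_j|^{1+1/k})$, both for $j=0$ and for $1 \le j \le \ell$; this is justified by Theorem \ref{weighted2}. The geometric sum calculation that produces the $\frac{q}{\rho-1}+1$ factor is unchanged, so setting $\rho=2$ yields $|E^*| = O(n^{1+1/k} \cdot (1+q))$. For the lightness, I would similarly strip one factor of $k$ from each summand in the proof of Lemma \ref{wtwt1}, since those factors originated from the same $O(k \cdot |V_j|^{1+1/k})$ size bound. The dominant term of the resulting sum, with $\rho = 2$, becomes $O(k \cdot q \cdot n^{1/k}) \cdot \omega(MST(G))$, which (absorbing the small-$j$ contribution into the dominant term) gives the claimed lightness $O(k \cdot n^{1/k} \cdot q)$.

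The one nontrivial part is the running time, where I need to recompute the cost of the $\ell$ invocations of Algorithm $WtdSp_2$. Applying Theorem \ref{weighted2} to $\tilde G_j$ costs $O(k \cdot |V_j|^{2+1/k})$. I would split into two ranges as in the proofs of Lemmas \ref{edgebound} and \ref{wtwt1}: for $1 \le j < \log_\rho q + 1$ we only have $|V_j| \le n$, so each such invocation costs $O(k \cdot n^{2+1/k})$ and there are $O(\log q)$ of them; for $\log_\rho q + 1 \le j \le \ell$ we may use $|V_j| \le n_j = qn/\rho^{j-1}$, giving a geometric series in $\rho^{-(j-1)(2+1/k)}$. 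With $\rho=2$ this series is bounded by a constant, so the total cost in this range is $O(k \cdot q^{2+1/k} \cdot n^{2+1/k}) = O(k \cdot q^2 \cdot n^{2+1/k})$, using $q < k$ to absorb $q^{1/k} = O(1)$. The invocation on $G_0$ adds another $O(k \cdot n^{2+1/k})$, and the preprocessing cost from Lemma \ref{lm:time} is dominated by these terms. Combining everything yields $O(k \cdot n^{2+1/k} \cdot (1+q^2))$.

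The main obstacle is just checking that the $(1+q^2)$ factor in the running time really absorbs both the $O(\log q)$ small-$j$ invocations and the geometric-sum contribution from the large-$j$ invocations; once one observes that $\log q \le q^2$ and that $q^{1/k} = O(1)$ under $q<k$, the bound falls out. No new structural ideas are needed beyond reapplying the three bookkeeping lemmas of Section 2.2 with the improved size guarantee of $WtdSp_2$ and the corresponding new running-time charge.
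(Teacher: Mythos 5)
Your proposal is correct and follows essentially the same route as the paper, which explicitly omits the proof of Theorem \ref{th:unweightedver3} and intends exactly this: rerun the stretch, size, weight, and running-time bookkeeping of Section 2.2 with Algorithm $WtdSp_2$ (Theorem \ref{weighted2}) as the black box, so the stretch bound of Lemma \ref{lm:stretch} is unchanged, one factor of $k$ is shaved from the size and lightness bounds, and the $\ell$ invocations of $WtdSp_2$ cost $O(k \cdot |V_j|^{2+1/k})$ each, summing (with $\rho=2$, $q^{1/k}=O(1)$) to the claimed $O(k \cdot n^{2+1/k} \cdot (1+q^2))$. Your accounting is in fact slightly generous in the large-$j$ range (a tighter geometric-sum bound would give $O(k\cdot n^{2+1/k}\cdot(1+\log q))$ there), but it stays within the stated bound, so the argument is sound.
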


\subsection{Fourth Variant (Integer-Weighted Graphs)}
The fourth variant of our construction applies to integer-weighted graphs only.
Denote by $\cH_{int}$ the variant of $H^*$ obtained by employing Algorithm $IntWtdSp$ for building 
all the black-box spanners, i.e., the spanners $H'_j$, $1 \le j \le \ell$,
and the spanner $H'_0$.
The new construction achieves exactly the same bounds as the basic construction $H^*$, except for the running time.
The running time of this variant  consists of two parts.
Specifically, it is the time required to compute the MST and the time required to compute
a spanner. On an integer-weighted graph the first  task can be done in $O(m+n)$ time
\cite{FW94}, while the second task requires $O(SORT(m))$ time (see Theorem 1.4 in \cite{Elkin11}).
Note, however, that the construction becomes randomized. 

We summarize the properties of the new construction $\cH_{int}$ in the following theorem.
\begin{theorem} \label{intweightedvar}
Let $G = (V,E)$ be an integer-weighted graph, with $n = |V|, m = |E|$.
For any integer $k \ge 1$ and any number $\frac{1}{2k-1} < q < k$,
a $\left((2k-1)\cdot (1+\frac{2}{q}) + \frac{2}{q}\right)$-spanner with 
expected $O\left(k \cdot n^{1+{1}/{k}} \cdot  \left(1 +  q \right)\right)$ 
   edges and expected lightness 
 $O\left(k^2  \cdot n^{1/k} \cdot q \right)$,
 can be built in $O(SORT(m))$ time.
\end{theorem}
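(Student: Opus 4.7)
The plan is to run the basic Algorithm $LightSp$ exactly as in Section \ref{sec:alg}, with the sole change that every black-box spanner call (both the one for $G_0 = (V, E_0)$ and the $\ell$ calls on the auxiliary graphs $\tilde G_j$) is replaced by Algorithm $IntWtdSp$ of Theorem \ref{weightedint}. Since the source edges of all representative edges are original edges of $E$, and $E$ is integer-weighted, every graph $\tilde G_j$ fed to $IntWtdSp$ is integer-weighted, so $IntWtdSp$ is applicable. I would then reuse the stretch, size, and lightness analyses of Section \ref{sec:second} essentially verbatim, and separately re-examine only the running time.

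For the stretch, note that $IntWtdSp$ gives a \emph{deterministic} $(2k-1)$ stretch guarantee, which is exactly what is used in Equation (\ref{stretchspan}) of the proof of Lemma \ref{lm:stretch}. Hence the entire argument of Section 2.2.1 carries through unchanged, yielding a $(2k-1)\cdot(1+\tfrac{2}{q}) + \tfrac{2}{q}$ stretch bound. For the number of edges, the only place where the black-box spanner's size enters the proof of Lemma \ref{edgebound} is the bound $|E'_j| = O(k \cdot |V_j|^{1+1/k})$. With $IntWtdSp$ this bound now holds only \emph{in expectation}, so by linearity of expectation the resulting bound on $|E^*|$ from Lemma \ref{edgebound} becomes an expected bound. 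The same substitution in Lemma \ref{wtwt1} (where $|E^*_j|$ appears as a multiplicative factor in the weight estimate) gives an expected lightness of $O\bigl(k^2 \cdot n^{1/k} \cdot \tfrac{q}{\rho-1}\bigr) = O(k^2 \cdot n^{1/k} \cdot q)$ after substituting $\rho = 2$, exactly matching the statement.

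The main point to verify is the $O(SORT(m))$ running time. On an integer-weighted graph the MST $T$ can be built in $O(m+n)$ time by \cite{FW94}, and the preorder Hamiltonian path $\cL$ together with $L = \omega(\cL)$ is then obtained in $O(n)$ time. Partitioning $E$ into $E_0,\ldots,E_\ell$ requires, for each edge $e$, computing the unique $j$ with $\omega(e) \in W_j$, which is $O(1)$ arithmetic on integers and hence $O(m)$ total. Building the multi-graphs $\hat G_j$ costs $O(m)$, and as noted in the remark following step 1 of the proof of Lemma \ref{lm:time}, one may feed the multi-graphs $\hat G_j$ directly into $IntWtdSp$ without pruning, since this does not affect the stretch/size/time guarantees. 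The sum of all black-box spanner invocations therefore costs $\sum_{j=0}^\ell O(SORT(|\hat E_j|)) = O(SORT(m))$, because the fastest known integer-sort bounds (expected $O(m\sqrt{\log\log n})$ or worst-case $O(m\log\log n)$) are linear in the instance size with a factor depending only on $n$, and the sets $\hat E_j$ are disjoint subsets of $E$.

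The only mildly subtle obstacle is ensuring that no routine step silently introduces a $\log n$ factor that would violate the $O(SORT(m))$ bound; in particular, the partition of $E$ into buckets $E_j$ must be done in linear time (which is fine using the integer RAM to compute the appropriate index from $\omega(e)$, $L$, and $\rho$), and the reassembly of source edges $s(e')$ for each $e' \in E'_j$ must use the pointers stored during the construction of $\tilde G_j$, so that it also runs in $O(m)$ total. Combining, the overall running time is $O(m+n) + O(SORT(m)) = O(SORT(m))$, yielding the theorem.
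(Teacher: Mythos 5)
Your proposal is correct and follows essentially the same route as the paper: replace every black-box call in Algorithm $LightSp$ by Algorithm $IntWtdSp$, observe that the stretch/size/lightness analyses of Section \ref{sec:second} carry over (with size and lightness becoming expected bounds), and bound the running time by $O(m+n)$ for the integer-weight MST \cite{FW94} plus $O(SORT(m))$ for the spanner computations \cite{Elkin11}. Your treatment is in fact somewhat more detailed than the paper's, which states these points without elaboration.
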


\subsection{Fifth Variant (Spanners in the Streaming Model)}
In this section we analyze our algorithm in the augmented streaming model.
Specifically, this is the model introduced by Aggarwal et al.\ \cite{ADRR04}, which allows
sorting passes over the input. 

Our algorithm relies on a streaming algorithm
for constructing sparse (but possibly heavy) spanners from \cite{Elkin11},
summarized in Theorem \ref{th:elkin} below.
We remark that Theorem \ref{th:elkin} also applies to multi-graphs.
\begin{theorem} \cite{Elkin11}  \label{th:elkin}
For any unweighted $n$-vertex graph $G = (V,E)$ and any integer $k \ge 1$, there
exists a one-pass streaming algorithm that computes a $(2k-1)$-spanner
with $O(k \cdot n^{1+1/k})$ edges (expected).
The processing time-per-edge of the algorithm is $O(1)$, and its space requirement
is $O(k \cdot n^{1+1/k})$ (expected).
\\Also, in the augmented streaming model (if the algorithm accepts
a sorted stream of edges as input) the algorithm produces $(2k-1)$-spanners
with $O(k \cdot n^{1+1/k})$ edges (expected) for weighted graphs as well.
The processing time-per-edge and space requirements of this algorithm are
the same as in the unweighted case.
\end{theorem}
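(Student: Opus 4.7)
The plan is to implement a streaming version of the Baswana--Sen clustering algorithm by fixing all random decisions up front, so that each arriving edge can be processed in constant time and the algorithm never needs to revisit an edge once it has been read.

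First, before reading any edges, I would draw an independent level $\ell(v) \in \{0,1,\ldots,k-1\}$ for every $v \in V$ using the standard geometric sampling: a vertex present at level $i$ is promoted to level $i+1$ independently with probability $n^{-1/k}$. This yields a hierarchy $V = V_0 \supseteq V_1 \supseteq \cdots \supseteq V_{k-1}$ with $\MathE[|V_i|] = n^{1-i/k}$, and a designated center $c_i(v) \in V_i$ for each $v$ at every level $i \le \ell(v)$, all of which can be stored in $O(k \cdot n)$ bits before the stream begins. During the pass, for every pair (vertex $v$, adjacent level-$i$ cluster $C$) I would maintain, in a hash table keyed by $(v, C)$, the lightest edge $(v,w)$ with $w \in C$ seen so far; I also maintain, for each $v$ and each $i \le \ell(v)+1$, a pointer to $v$'s current level-$i$ cluster center. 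When an edge $(u,v)$ arrives I perform only $O(k)$ hash lookups (which I will amortize to $O(1)$ per edge under the standard ``payment'' argument, since each vertex's levels above $\ell(v)$ are simply ignored) to decide whether it updates any (vertex, adjacent cluster) minimum and, if so, to evict the previously stored edge.

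For the augmented (sorted) variant, the key observation is that when edges arrive in non-decreasing order of weight, the first edge ever inserted for a given $(v, C)$ key is automatically the lightest, so no eviction is ever needed, giving strict $O(1)$ worst-case time per edge and immediately handling arbitrary real weights without modifying the logic. In the unweighted case we simulate the same behavior by treating every edge as having weight $1$, so any first insertion into a $(v, C)$ slot is kept.

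The main obstacle I anticipate is proving the $(2k-1)$ stretch bound without the benefit of being able to revisit previously seen edges when the clustering evolves. To handle this, I would argue by induction on the level $i$ at which two endpoints $u,v$ of a non-spanner edge $e=(u,v)$ first share a common cluster ancestor: at that level the spanner contains, by construction of the hash-table invariants, a lightest edge from $u$'s cluster to $v$'s cluster, and each of the endpoints is joined to its level-$i$ center by a path of at most $i$ already-stored cluster-connection edges, each of weight at most $\omega(e)$ by the lightest-edge invariant (in the sorted case, because those edges were read earlier; in the unweighted case, trivially). Combining these pieces gives a spanner path of at most $2i+1 \le 2k-1$ hops. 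The expected size and space bound $O(k \cdot n^{1+1/k})$ then follows from a standard Baswana--Sen analysis: for each level $i$ and each vertex $v$, the expected number of distinct level-$(i+1)$ clusters adjacent to $v$ before $v$ itself is promoted is $O(n^{1/k})$, and summing over $v$ and the $k$ levels yields the claim.
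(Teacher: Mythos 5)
You should note at the outset that the paper contains no proof of this statement: Theorem \ref{th:elkin} is imported verbatim from \cite{Elkin11} and used purely as a black box in the streaming variant of Section 3.5, so there is nothing in the paper to compare your argument against; what you have written is an attempted reconstruction of the external result. Judged on its own terms, it has genuine gaps. First, you claim that a designated center $c_i(v)\in V_i$ can be stored for every vertex and every level \emph{before the stream begins}. That is impossible: which cluster a vertex joins at level $i$ depends on which sampled clusters turn out to be adjacent to it, i.e., on edges that have not been read yet. Only the random levels (equivalently, the nested sets $V_0\supseteq V_1\supseteq\cdots\supseteq V_{k-1}$) can be fixed in advance; cluster membership must be built online, and your later sentence about a pointer to the ``current'' level-$i$ center quietly assumes exactly that, contradicting the earlier claim. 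Once membership is dynamic, the real work begins: one must show that the evolving level-$i$ clusters still have radius at most $i$, and that the expected number of edges a vertex contributes per level remains $O(n^{1/k})$ even though clusters grow while the stream is read; this is the actual content of \cite{Elkin11} and it is missing. Second, the $O(1)$ processing time per edge is not established: $O(k)$ hash lookups per edge amortize to $O(k)$ per edge, not $O(1)$, and the ``payment argument'' you invoke (ignoring levels above $\ell(v)$) does not remove the factor $k$, since the levels that matter are those at or below the endpoint's current host level. One needs a per-edge rule that consults only a constant amount of state (essentially, the deepest current cluster of each endpoint).

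Third, the stretch argument is framed incorrectly. The endpoints of a discarded edge need never acquire a ``common cluster ancestor''; in Baswana--Sen-type constructions the edge $(u,v)$ is dropped because (say) $v$ already stores an edge into $u$'s current level-$i$ cluster $C$, and the witness path is $v$, then that stored edge into some $w\in C$, then at most $i$ hops from $w$ to the center of $C$, then at most $i$ hops to $u$, i.e., at most $2i+1\le 2k-1$ edges; the stored edges are vertex-to-cluster edges, not a ``lightest edge from $u$'s cluster to $v$'s cluster.'' What must be proved is (i) that at the moment an edge is discarded such a witness path of at most $2k-1$ edges already lies in the spanner, (ii) that it persists, which holds only because spanner edges are never deleted, and (iii) in the sorted (weighted) case, that every edge on the witness path was read no later than $e$ and hence has weight at most $\omega(e)$. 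Your sketch gestures at (iii) but the induction ``on the level at which $u$ and $v$ first share a cluster'' would not go through as stated. Since the paper itself simply cites \cite{Elkin11} for all of this, the appropriate course in the write-up is to do the same rather than to reprove the theorem.
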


Our algorithm will run $\ell+1$ copies of the algorithm for weighted graphs from Theorem \ref{th:elkin}.
We will denote these copies $\cA_j$, for $0 \le j \le \ell$.

Our algorithm starts with a sorting pass over the stream of edges.
After this pass, in the consecutive pass the algorithm reads edges in a non-decreasing order
of weights. The objective of the first pass (after the sorting pass) is to compute an MST of
the input graph. To accomplish this, the algorithm maintains a Union-Find data structure
(see Ch.\ 21 in \cite{CLRS90}). It is known \cite{Blum86,ABR99} that all operations can
be performed in worst-case $O(\frac{\log n}{\log \log n})$ time, and in total $O(n + m \cdot \alpha(n))$
time, using $O(n)$ space.

As a result of the first pass, the MST $T$ is computed. The Hamiltonian path $\cL$ of $M_T$ is computed
between the passes. In addition, the algorithm maintains the location on $\cL$ of every vertex $v \in V$.
It also initializes $\ell+1$ arrays $A_j$ of size $n_j$ each, $0 \le j \le \ell$.

Then the algorithm performs the second pass over the sorted stream of edges.
For each edge $e = (u,v)$ that the algorithm reads in the second pass, the algorithm determines
the index $j$ such that $\omega(e) \in W_j$. Then it tests if the edge crosses between
different $j$-level intervals, i.e., belongs to $\bar E_j$. If it does not, this edge is skipped.
Otherwise the algorithm passes the edge $e$ to the $j$th copy $\cA_j$ of the
weighted streaming algorithm from Theorem \ref{th:elkin}.
The streaming algorithm $\cA_j$ will ultimately produce the spanner $H'_j$ for $\hat G_j = (V_j, \hat E_j)$.
(Here we follow the notation of Section \ref{sec:alg}.)
If the streaming algorithm $\cA_j$ decides to insert $e$ into the spanner $H'_j$, it will also 
insert its source edge $s(e)$ into the ultimate spanner $H^* = (V,E^*)$.
As a result the spanner $H^*$ will contain the union of the $j$-level spanners, 
for $0 \le j \le \ell$. The edge set of the MST will also be inserted into $H^*$ (either
between the passes, or after the second pass).

This completes the description of the algorithm. By Theorem \ref{th:elkin}, the (expected) space requirement in the second
pass is $\sum_{j=0}^\ell O(k \cdot n_j^{1+1/k}) = O(k \cdot n^{1+1/k})$.
The processing time-per-edge is $O(1)$. 

We summarize our streaming algorithm in the following theorem.
\begin{theorem}
In the augmented streaming model, for any weighted graph $G = (V,E)$,  
any integer $k \ge 1$, and any  constant $\eps > 0$, our algorithm requires two passes after
the sorting pass. It computes a $((2k-1)\cdot (1+\eps))$-spanner with expected $O(k \cdot n^{1+1/k})$
edges and expected lightness $O(k^2 \cdot n^{1/k})$.
The expected space requirement is $O(k \cdot n^{1+1/k})$.
The worst-case processing time-per-edge of the first (respectively, second) pass is $O(\frac{\log n}{\log \log n})$
(resp., $O(1)$). Moreover, the overall processing time of the first pass is $O(m \cdot \alpha(n))$.
\end{theorem}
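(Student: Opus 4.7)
The plan is to reduce the analysis to the centralized case from Section 2.2, with all numerical bounds inherited by substituting $q=\Theta(1/\eps)$ and $\rho=2$, and then bolt on a per-pass accounting of the streaming resources. Because the input stream is sorted (after the sorting pass), the only freshly streaming ingredients are (a) the Union-Find MST construction in the first pass and (b) the $\ell+1$ copies $\cA_j$ of the weighted streaming spanner from Theorem \ref{th:elkin}, each of which is fed a sub-stream of edges in sorted order of weight.

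For the \emph{stretch}, observe that Lemma \ref{lm:stretch} uses the spanner $H'_j$ only through the property that it is a $(2k-1)$-spanner for the representative (multi-)graph $\hat G_j$ plus the MST detours inside each $j$-level interval. By Theorem \ref{th:elkin}, each $\cA_j$ returns such a $(2k-1)$-spanner on the sub-stream it receives, so the entire calculation of Section 2.2.1 transfers verbatim; with $q=\Theta(1/\eps)$ and $\rho=2$, the stretch bound $(2k-1)(1+2/q)+2/q$ collapses to $(2k-1)(1+\eps)$. The same substitution drives the \emph{edge count and lightness}: the proofs of Lemmas \ref{edgebound} and \ref{wtwt1} decompose level by level and rest on $|E'_j|=O(k\,|V_j|^{1+1/k})$, which now holds \emph{in expectation} by Theorem \ref{th:elkin}. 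Linearity of expectation then yields the claimed expected $O(k\cdot n^{1+1/k})$ edges and expected lightness $O(k^2\cdot n^{1/k})$, and the same telescoping sum gives the expected space bound $\sum_{j=0}^{\ell} O(k\,|V_j|^{1+1/k}) = O(k\cdot n^{1+1/k})$.

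For the \emph{pass schedule and per-edge timing}, the first pass runs Kruskal's algorithm over the sorted stream using a Union-Find structure; by the results of \cite{Blum86,ABR99} the worst-case per-operation time is $O(\log n/\log\log n)$ and the total cost is $O(m\,\alpha(n))$, using $O(n)$ space. Between the passes we compute the Hamiltonian ordering $\cL$ via a preorder traversal of $T$, store each vertex's position on $\cL$, and initialize the arrays $A_j$; this uses $O(n)$ time. In the second pass, for each incoming edge $e=(u,v)$ we (i) determine its level $j$ by $\omega(e)\in W_j$ in $O(1)$ (a binary comparison on $\log_\rho(\omega(e)\cdot n/L)$), (ii) test in $O(1)$ whether $u,v$ lie in different $j$-level intervals via their stored positions, and (iii) if so, forward $e$ to $\cA_j$, whose per-edge cost is $O(1)$ by Theorem \ref{th:elkin}; if $\cA_j$ retains the edge we add its source to the output. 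This gives the stated $O(1)$ worst-case time-per-edge in the second pass.

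The principal point that needs care is that each $\cA_j$ must receive its sub-stream in sorted order of weight for the weighted-graph guarantee of Theorem \ref{th:elkin} to apply, and it must tolerate parallel edges, since $\hat G_j$ is a multi-graph. Both conditions are met: sortedness of the full stream is preserved under restriction to any level $j$, and Theorem \ref{th:elkin} is explicitly noted to extend to multi-graphs, so no separate pruning of $\hat G_j$ into $\tilde G_j$ is needed inside the stream. With these two observations in place, every bound listed in the theorem follows by composing the arguments above.
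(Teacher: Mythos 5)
Your proposal matches the paper's own argument essentially step for step: first pass runs Union--Find/Kruskal over the sorted stream (worst-case $O(\log n/\log\log n)$ per operation, $O(m\cdot\alpha(n))$ total), the Hamiltonian ordering and arrays are set up between passes, the second pass routes each crossing edge to the copy $\cA_j$ of the streaming spanner of Theorem \ref{th:elkin} (applied directly to the multi-graph $\hat G_j$, exactly as the paper notes), and the stretch, expected size, expected lightness, and expected space bounds are inherited from the centralized analysis with $q=\Theta(1/\eps)$, $\rho=2$ via linearity of expectation. This is correct and follows the same route as the paper, with your remarks on preserved sortedness and multi-graph tolerance making explicit two points the paper states only briefly.
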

As was mentioned in the introduction, the algorithm of \cite{CDNS92} can be viewed as an algorithm in this model.
It constructs $(2k-1)$-spanners with $O(n^{1+1/k})$ edges and lightness $O(k \cdot n^{(1+\eps)/k})$.
It requires one pass after the initial sorting pass, but its processing time-peg-edge is very large (specifically, it is $O(n^{1+1/k})$).
 \bibliographystyle{latex8}
\bibliography{latex8}

\end {document}